\newcommand{\subparagraph}{}
\renewcommand{\IEEEQED}{\IEEEQEDopen}
\newcommand{\figref}[1]{Figure~\ref{#1}}
\DeclareMathOperator{\snr}{snr}
\DeclareMathOperator{\snrae}{snr_{ae}}
\DeclareMathOperator{\snrbe}{snr_{be}}
\newcommand{\expv}{\mathbb{E}}
\DeclareMathOperator{\Var}{Var}
\newcommand{\hd}{\text{HD}}
\newcommand{\bxhd}{\left\lVert\bm b_x^{\hd}\right\rVert^2}
\newcommand{\exhd}{\left\lVert\bm e_x^{\hd}\right\rVert^2}
\newcommand{\fd}{\text{FD}}
\newcommand{\bxfd}{\left\lVert\bm b_x^{\fd}\right\rVert^2}
\newcommand{\exfd}{\left\lVert\bm e_x^{\fd}\right\rVert^2}
\newcommand{\rrhd}{R_{r}^{\hd}}
\newcommand{\rrhdup}{\overline{R}_{r}^{\hd}}
\newcommand{\rrfd}{R_{r}^{\fd}}
\newcommand{\rrfdlow}{\underline{R}_{r}^{\fd}}
\newcommand{\tf}{\mathbf}
\newcommand{\rskhd}{R_{sk}^{\hd}}
\newcommand{\rskhdup}{\overline{R}_{sk}^{\hd}}
\newcommand{\rskfd}{R_{sk}^{\fd}}
\newcommand{\rskfdlow}{\underline{R}_{sk}^{\fd}}
\titlespacing*{\subsubsection}{0pt}{1em}{0pt}
\newtheoremstyle{remarkmod}
  {\topsep}   % ABOVESPACE
  {\topsep}   % BELOWSPACE
  {\normalfont}  % BODYFONT
  {0pt}       % INDENT (empty value is the same as 0pt)
  {\itshape} % HEADFONT
  {.}         % HEADPUNCT
  {5pt plus 1pt minus 1pt} % HEADSPACE
  {}          % CUSTOM-HEAD-SPEC
\theoremstyle{remarkmod}
\newtheorem{proposition}{Proposition}
\newtheorem{definition}{Definition}
\newtheorem{remark}{\textit{Remark}}
\newtheorem{property}{Property}
\newtheorem{example}{Example}
\newcommand*{\textoverline}[1]{$\overline{\hbox{#1}}\m@th$}
\tikzstyle{antenna} = [regular polygon,regular polygon sides=3, draw,shape
\begin{document}
%
% paper title
% can use linebreaks \\ within to get better formatting as desired
\title{Full-Duplex vs. Half-Duplex Secret-Key Generation\thanks{The result in this work was presented in part at the IEEE International Workshop on Information Forensics and Security, Rome, Italy, Nov. 2015 \cite{src:vogt2015full}. This work is supported by the Federal Ministry of Education and Research (BMBF) of the Federal Republic of Germany (F\"orderkennzeichen 16 KIS 0030, Prophylaxe).}}

% author names and affiliations
% use a multiple column layout for up to three different
% affiliations
\author{Hendrik Vogt, Zohaib~Hassan~Awan, Aydin Sezgin\\
Department of Electrical Engineering and Information Technology\\
Ruhr-Universit\"at Bochum, Germany\\
\{hendrik.vogt, zohaib.awan, aydin.sezgin\}@rub.de }
%\thanks{The result in this work was presented in part at the IEEE International Workshop on Information Forensics and Security, Rome, Italy, Nov. 2015 \cite{src:vogt2015full}.}

% make the title area
\maketitle

\IEEEpeerreviewmaketitle

\begin{abstract}
Full-duplex (FD) communication is regarded as a key technology in future 5G and Internet of Things (IoT) systems. In addition to high data rate constraints, the success of these systems depends on the ability to allow for confidentiality and security.
Secret-key agreement from reciprocal wireless channels can be regarded as a valuable supplement for security at the physical layer. In this work, we study the role of FD communication in conjunction with secret-key agreement. We first introduce two complementary key generation models for FD and half-duplex (HD) settings and compare the performance by introducing the key-reconciliation function.  Furthermore, we  study the impact of the so called probing-reconciliation trade-off, the role of a strong eavesdropper and analyze the system in the high SNR regime. We show that under certain conditions, the FD mode enforces a deteriorating impact on the capabilities of the eavesdropper and offers several advantages in terms of secret-key rate over the conventional HD setups. Our analysis reveals as an interesting insight that perfect self-interference cancellation is not necessary in order to obtain performance gains over the HD mode.
%We illustrate our results with the help of some numerical examples.
\end{abstract}

\section{Introduction}
The emerging deployment of devices with wireless connectivity in large
numbers --- commonly referred to as the Internet of Things (IoT) --- has
attracted significant attention in the research community. In IoT networks,
multiple nodes are allowed to interact with each other over a shared
medium. The communication between the nodes can be partitioned into two
types, namely, HD and FD modes. In HD mode, the transmitter and the receiver of each node are constrained to share their resources, i.e., time or frequency. More specifically, each node either listens to the transmission from the other nodes or broadcasts its own information over the network. As opposed to the HD mode, in the FD mode %a special case of  two-way  communication, 
nodes can simultaneously transmit and receive
information on the same frequency band. 
From a practical viewpoint, a number of works
(see~\cite{src:khandani2013two,src:duarte2014design,src:bharadia2013full},
for instance), have recently proposed functional prototypes of FD systems. Due to the close
proximity of the transmitter and the receiver antennas, simultaneous
transmission and reception of information emanates a key issue of
 self-interference  (SI). The characterization and cancellation of SI
is the main challenge in the practical implementation of FD
systems~\cite{src:korpi2014full}.
%From a theoretical viewpoint, the authors in~\cite{src:day2012full} consider a multiple-input multiple-output (MIMO) FD system with Rayleigh-fading channels and full channel state information (CSI); and, establish bounds on trasnmission rates by taking the impact of SI into account.
%In a recent work, the authors~\cite{src:cirik2014achievable} study the role of imperfect  CSI in the context of FD systems and establish a lower bound on the transmission rates.
%or on outage probabilities~\cite{src:choi2014outage} for a FD relaying setup
%\todo{NOT sure what is meant by this, are these reference valid for FD or HD ?}.

Due to the openness of the wireless medium, the transmission between two
communicating parties can be overheard by other nodes in the network
for free. The leakage of information to unintended nodes in the network
may have serious consequences~\cite{src:jorswieck2015broadcasting}. In his seminal work, Wyner introduced a
basic wiretap channel model~\cite{src:wyner1975wire} to study secrecy  by taking
the attributes of the physical channel into account. A wiretap
channel consists of three nodes, two communicating users, i.e., Alice and
Bob and an external Eavesdropper (Eve) from whom the communication needs
to be protected. In~\cite{src:wyner1975wire}, Wyner characterized
the secrecy capacity of this model, when the Alice-to-Bob link is
stronger than the Alice-to-Eve link. The wiretap channel is extended to study a variety of multi-user settings, namely, the broadcast channel~\cite{src:awan2014achievable,src:chen2015on}, the multi-access channel~\cite{src:awan2013multiaccess}, the relay-eavesdropper channel~\cite{src:awan2012secure} and the wiretap channel with correlated sources~\cite{src:chen2014wiretap}. For a review of these and other related models, the interested reader may refer to~\cite{Liang} and references therein.

Alternatively, the confidential message can also be secured by the help of secret-key agreement~\cite{src:gamal2013Achievable}. In key agreement
systems, Alice encrypts the message with the help of a  secret key 
and broadcasts it over the network. It is assumed that the secret key is
available at only the legitimate nodes. The legitimate user Bob knows the secret key and can easily decode the confidential message. The
unintended node, Eve, is unaware of the secret key and hence is unable
to decode the confidential message. From information-theoretic viewpoint,
Ahlswede \emph{et al.}~\cite{src:ahlswede1993}, and
Maurer~\cite{src:maurer1993} study the problem of secret-key agreement
in bi-directional systems. They introduce the  source-type 
model, where all users observe information from jointly random sources. By means of public communication, Alice and Bob utilize these sources in order to harness an advantage over Eve~\cite{src:chou2014separation}. The maximum rate at which the secret keys can be generated  reliably, securely, and uniformly, is called as  secret-key capacity. 
The source-type model study in~\cite{src:ahlswede1993} has gained  much
attention in the research community and is extended to study a variety of
channels, namely,
rate-limited public communication~\cite{src:csiszar2000common,src:watanabe2010secret}, 
 and with Gaussian vector
sources~\cite{src:watanabe2011secret}. Recently, these works are
generalized to product
sources in~\cite{src:liu2014key}. In another
related work~\cite{src:nitinawarat2012secret}, the secret-key rate of
jointly
Gaussian sources under a quantization constraint is studied. A situation
where Alice can control the excitation of sources is considered
in~\cite{src:chou2015sender}. 

In practice, the main challenge in key agreement protocols is how to distribute the keys securely. In general, the channel states between two legitimate users,
i.e., Alice-to-Bob and Bob-to-Alice is known to be largely reciprocal,
while the channel from both legitimate nodes to Eve is not the same. Due to  the lack of similar observations, Eve is unable to
trace back the respective channel states. Thus, Alice and Bob can
utilize an advantage of common information to distill a secret key
which can be used to secure information~\cite{src:chen2014gaussian}. Some practical
implementations, for example in~\cite{src:mathur2008radio,src:liu2012exploiting,src:jana2013secret},
have demonstrated the feasibility of this approach. 
From an information-theoretic viewpoint, the authors of~\cite{src:wallace2009key,src:ye2010information} study the problem of key sharing by utilizing the leverage provided due to the reciprocity of wireless channels and establish different methods for key generation. Building on these works, in~\cite{src:khisti2012interactive} the fundamental trade-off between obtaining reciprocal CSI and randomness sharing is considered. In~\cite{src:zhou2014secret},  the authors  establish bounds on secret-key agreement in a two-way relay setting under active eavesdropping. As opposed to the previously mentioned results in which systems only operate in the HD mode to generate secret keys, in this work, we investigate this issue in the context of FD transmissions.
%{TRY to combine these reference with the above sentence so it looks
%like a one story}
%\textbf{In this paper, we utilize the two concepts of secret-key
%agreement from
%reciprocal wireless channels and full-duplex communication. }

\begin{figure}
\centering %\epsfig{file=./fig/sys.eps, width=0.6\textwidth}
\begin{tikzpicture}[
    block/.style={draw, drop shadow, fill=white, rectangle, minimum height=0.75cm, minimum width=3.5em},
	]
	\node[block] (alice) at (0,0) {Alice};	
	\node[block,below right=1cm and 1.75cm of alice] (eve) {Eve};
	\node[block,above right=1cm and 1.75cm of eve] (bob) {Bob};
%	\node[block,below=1cm of bob] (charlie) {Charlie};
	\node[antenna,above right=-0.0cm and 0.25cm of alice,anchor=west] (antalice) {};
	\node[antenna,above=0.25cm of eve] (anteve) {};
	\node[antenna,above left=-0.0cm and 0.25cm of bob,anchor=east] (antbob1) {};
	%\node[antenna,above left=-0.7cm and 0.25cm of bob] (antbob2) {};
	\node[below=0.15cm of antbob1] (antbob1b) {};
	%\node[below=0.15cm of antbob2] (antbob2b) {};	
	
	\node[above right=-0.1cm and 0.6cm of alice,anchor=west] (chalice1) {};
	\node[below right=-0.4cm and 0.6cm of alice,anchor=west] (chalice2) {};
	\node[below right=-0.2cm and 0.6cm of alice,anchor=west] (chalice3) {};
	\node[above left=-0.1cm and 0.6cm of bob,anchor=east] (chbob1) {};
	\node[below left=-0.4cm and 0.6cm of bob,anchor=east] (chbob2) {};
	\node[below left=-0.2cm and 0.6cm of bob,anchor=east] (chbob3) {};
	
	\node[left=0.1cm of anteve] (cheve1) {};
	\node[right=0.1cm of anteve] (cheve2) {};
	\draw[thick] (antalice) |- (alice);
	\draw[thick] (eve) -- (anteve);
%	\draw[thick, ->, dashed] (bob) -- (charlie);
	\draw[thick] (antbob1) -- (antbob1b.center);
	\draw[thick] (antbob1b.center) -- (antbob1b -| bob.west);
	%\draw[thick] (antbob2b.center) -- (antbob2b -| bob.west);
	\draw[thick,->,dashed] (chalice1) -- (chbob1) node [above,pos=0.5] {$h_{ab}$};
	\draw[thick,->,dashed] (chbob2) -- (chalice2) node [below,pos=0.5] {$h_{ba}$};
	%\draw[thick,<-] (chalice2) -- (chbob2); % node [above,pos=0.5] {$\bm h_1$};
	\draw[thick,->,dashed] (chalice3) -- (cheve1) node [below left,pos=0.6] {$h_{ae}$};
		\draw[thick,->,dashed] (chbob3) -- (cheve2) node [below right,pos=0.5] {$h_{be}$};
\end{tikzpicture}
\caption{The key agreement system model.} \label{fig:sys}
\end{figure}
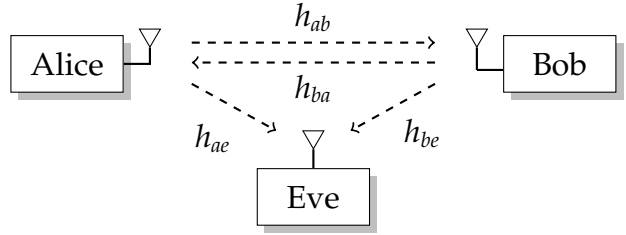

In this work, we study a model where three nodes Alice, Bob and Eve are equipped with a single antenna each as shown in Figure~\ref{fig:sys}, where Alice and Bob interact between each other to
establish a secret key that needs to be concealed from Eve. We assume that the Eve is a passive eavesdropper and only listens to the communication. The transmission protocol  consists of two phases. In the first phase,
Alice and Bob obtain common information by probing and estimating
the current state of their respective wireless channels. In the second
phase, Alice and Bob perform public
communication under a transmission rate constraint. The public communication by Alice is designed as one-shot key \emph{reconciliation}~\cite{src:chou2014separation}. Recall that, in the HD scenario, the communication nodes need to allocate distinct time or frequency resources for transmission or reception. In contrast to this, the extension to FD is far from straightforward. The main difficulty in FD mode arises due to occurrence of SI, since the nodes can transmit and receive information over the same frequency band. 

%In contrast to HD systems,  
%In existing literature~\cite{src:zhou2014secret}, the overhead of public communication required for key reconciliation is often neglected. However, such communication is always displacing payload data, and therefore we believe its impact should be considered.  
%
%The key ingredients in this work are the HD and FD representations of the aforementioned model. We believe that in practical implementation of such systems, it is important to understand the trade-off between HD and FD mode for secret-key agreement systems. The transition from HD to FD setups is not a trivial extension. In a HD scenario, the communication nodes need to allocate distinct time or frequency resources for transmission or reception. This limitation does not exits in case of FD, therefore resources are used more efficiently. However, since SI occurs in FD mode, the actual gain in terms of performance depends on certain conditions.  
%%Accordingly to previous works on FD communication, we take into account both the influence of SI and imperfect channel estimations. 
%%EXPLAIN THE DIFFERENCE BETWEEN HD AND FD AND WHY THIS PROBLEM IS NOT
%%TRIVIAL TO SOLVE (SELF INTERFERENCE).\iffalse
%
%%At this point, we highlight the key differences between this work and the existing literature.

We now summarize the main contributions of this work. We first introduce the key-reconciliation function that serves as a metric to measure the performance of both HD and FD systems. Next, we show that there exists a fundamental trade-off between the channel probing and reconciliation phases. We also study a special case of a strong eavesdropper, in which the Eve is allowed to remove the contribution of receiver noise. We provide the region under which the FD mode outperforms the HD mode. Finally, we also investigate the asymptotic behavior of this model in high signal-to-noise ratio (SNR) regimes. For this setup, we establish certain conditions that shed light on the benefit of FD over HD mode. We illustrate our results with the help of some numerical examples. 
%We now highlight the key differences between the results establish in this work and related works in~\cite{src:zheng2013improving,src:zhu2014joint}. The FD system from secrecy capacity perspective is studied by~\cite{src:zheng2013improving,src:zhu2014joint}, where different jamming strategies are used to conceal information. Note that, these results are established from the secrecy capacity viewpoint and do not take the key agreement perspective into account. 
\iffalse To the best of our
knowledge, no work in existing literature has focused on the secret-key
generation problem with FD capabilities from reciprocal wireless channel viewpoint under rate-constraint reconciliation. \fi

\vspace{.5em}
\noindent This paper is structured as follows. In section~\ref{sec:systemmodel},
the system models of both HD and FD modes are introduced. Section~\ref{sec:rates} defines the secret-key rate, more specifically, the
key-reconciliation function necessary for the evaluation of the system
performance. In section~\ref{sec:results}, the main contributions of this work are discussed. Finally, section~\ref{sec:conclusion} concludes
this paper by summarizing its contribution.

\vspace{.5em}
\textit{Notation:}
We  use following notations throughout this work. All vectors are
denoted by  small bold-face letter  $\tf{x}$, and  matrices by capital
bold-face letter  $\tf X$. The operator $\exp$ denotes the exponential function, $\ln$ denotes the logarithm with respect to base $e$, while $\log_2$ is used for base $2$. We denote the matrix transpose by
$(\cdot)^{T}$, expectation with respect to a random variable $x$ by
$\expv_x\left[ \cdot \right]$, entropy by $H(\cdot)$, differential entropy by $h(\cdot)$, Euclidean distance by $\lVert\cdot\rVert$ and ceiling operator by
$\lceil\cdot\rceil$. The operator $\overset{!}{=}$ allows equality of
left and right operand. The notation $\mathcal{N}\left(\bm 0, \tf
I_N\right)$ denote a Gaussian random variable with zero-mean and
identity covariance matrix of order $N\times N$.
\section{System model}
\label{sec:systemmodel}
We consider a key agreement model as shown in~\figref{fig:sys}. The system model consists of three nodes, i.e., two legitimate nodes, Alice and Bob, and an eavesdropper Eve, where each node is equipped with a single antenna. The parameter $h_{ab}$ is the channel which connects Alice-to-Bob and $h_{ba}$ is the channel which connects Bob-to-Alice. Furthermore, these two links may not necessarily fully correlate with each other. The parameter $h_{ae}$ is the channel which connects Alice-to-Eve and  $h_{be}$ is the channel which connects Bob-to-Eve. We assume that all channels comply with a real-valued flat-fading model.
We consider a block-fading environment where during one coherence block the channel coefficients remain constant and change to independent realizations in the next block. The key agreement protocol spans over $n$ such blocks, and is depicted in \figref{fig:seq}.
\begin{figure}
\centering %\epsfig{file=./fig/sys.eps, width=0.6\textwidth}
\begin{tikzpicture}[
    phase/.style={draw, drop shadow, fill=white, rectangle, minimum height=2em, minimum width=5cm},
    block/.style={draw, fill=yellow, rectangle, minimum height=2em, minimum width=2em},
	field/.style={outer sep=0pt, draw, minimum height=7mm,
	 minimum width=#1cm,anchor=center,text width=#1cm, align=center}    
]
	%node distance=0pt,outer sep=0pt,
	\matrix (est) [matrix of nodes,column sep=-\pgflinewidth,
		 row 1/.style={nodes={field=.15}}] {
		 |[field=0.5] (one)|$1$  & |[field=0.5]|$2$ & |[field=0.5] (three)|$3$ & |[field=1.25]| $\cdots$ & |[field=0.5] (betan) |$\beta n$ &
		 |[field=1.1] (betanp)|$\beta n+1$ & |[field=1.75]| $\cdots$ & |[field=0.5] (last)|$n$ 
		 \\
		};		
		
	\node[above=0.1cm of three, text width= 2cm, align=center] (coh) {Channel coherence block};
	\node[above right=0cm and 0cm of coh.east] (cohright) {};
	\node[above left=0cm and 0cm of coh.west] (cohleft) {};
	
	\draw (cohright) -- (three.north east);
	\draw (cohleft) -- (three.north west);
		
	\draw[decorate,decoration={brace,amplitude=12pt,mirror},xshift=0pt,yshift=0cm]	(one.south west) -- (betan.south east) node [black,midway,yshift=-0.7cm] 
	{Probing phase};
	\draw[decorate,decoration={brace,amplitude=12pt,mirror},xshift=0pt,yshift=0cm]	(betanp.south west) -- (last.south east) node [black,midway,yshift=-0.7cm] 
	{Reconciliation phase};
\end{tikzpicture}
\caption{Key agreement protocol --- Channel estimation by probing and subsequent phase for key reconciliation by public communication.} \label{fig:seq}
\end{figure}
%\tikzstyle{waves} = [snake=expanding waves, segment amplitude = 1.5cm, segment length = 0.3 cm,draw ]
We assume that all interactions between Alice-to-Bob and vice versa is  authenticated  by some means. In what follows, we elucidate these two phases, i.e., probing and reconciliation phase, associated with HD and FD modes. %\todo{Remark on multiple-antenna Eve}
 \vspace{-1 em}
\begin{remark}
In our system model, we consider an Eve with only one antenna. A more capable Eve can, however,  deploy multiple antennas in order to collect more observations. Subsequently, Eve can perform some information fusion to obtain a better estimate of the current channel state. For the sake of simplicity, we restrict ourselves to single antenna configuration, and emulate a stronger Eve by allowing  higher correlation of channel estimations to those of the legitimate nodes.
\end{remark}
\vspace{-2 em}
\subsection{Probing phase}
 \vspace{-.5 em}
In each block, Alice and Bob interact between each other to send pilot signals to estimate the current realization of the channel. These signals are also known to Eve. The probing phase consists of a total of $\beta n$ blocks, $0<\beta\leq 1$. In the $k$-th block, Alice gets one observation $x_k\in\mathbb{R}$, Bob gets $N_y$ observations in $\tf y_k\in\mathbb{R}^{N_y\times 1}$, and Eve gets $N_z$ observations in $\tf z_k\in\mathbb{R}^{N_z\times 1}$. The choice of parameters $N_y$  and $N_z$, is selected based on the operating mode (HD or FD) of the system. At the end of  $\beta n$  blocks, the independent and identically distributed (i.i.d.) Gaussian multiple vector source observations at Alice, Bob and Eve, are given by $x^{\beta n}$, $\tf y^{\beta n}$ and $\tf z^{\beta n}$, respectively. 

\vspace{-.75em}
\subsubsection{HD mode}
\vspace{.5em}
In HD mode, $N_y:=1$ and $N_z:=2$, i.e., $\tf y=y$ and $\tf z=\left(z_1,z_2\right)^T$. The channel observations in each block are given by
\begin{subequations}
\label{HD-model}
\begin{align}
\label{eq:aliceHdObs}
x &=\sqrt{\snr}\,h_{ba}+n_a, \\ 
\label{eq:bobHdObs}
y &=\sqrt{\snr}\,h_{ab}+n_b,  \\ 
\label{eq:eveHdObs1}
z_1 &=\sqrt{\snrae}\,h_{ae}+n_{ae},  \\
\label{eq:eveHdObs2}
z_2 &=\sqrt{\snrbe}\,h_{be}+n_{be}, 
\end{align}
\end{subequations}
where the channel coefficients $h_{ij}$, with $i\in \left\lbrace a,b \right\rbrace$ and $j\in \left\lbrace a,b,e \right\rbrace$, $i\neq j$,  are  jointly  Gaussian random variables with zero mean and unit variance. The correlation coefficients between different random variables are given by 
\begin{eqnarray}
&&\expv\left[ h_{ae}h_{be} \right]=\rho_{e}, \notag \\
&&\expv\left[h_{ae} h_{ba}\right]=\rho_{ae},\notag \\
&&\expv\left[ h_{be}h_{ba} \right]=\rho_{be},\notag \\
&&\expv\left[ h_{ba}h_{ab} \right]=\delta\rho_{ba}.
\end{eqnarray}
The parameter $0<\delta\leq 1$ captures the delay penalty  that consecutive measurements undergo in time-variant environments. This situation occurs when the reply of Bob to Alice's probing is already delayed further than the channel coherence time. The additive noise term $n_{ij}$ is i.i.d. with $n_{ij}\sim\mathcal{N}\left(0, 1\right)$. The signal-to-noise ratios (SNRs) at the legitimate nodes and the links Alice-to-Eve, Bob-to-Eve, are denoted by $\snr$, $\snrae$, and $\snrbe$, respectively. 
Note that the variances of both the channel and the noise variables are equal to one. In order to account for different scenarios, the parameters $\snr$, $\snrae$ and $\snrbe$ can be chosen appropriately. 
%For convenience, we assume that Eve retains the same $\snre$ for both observations~\eqref{eq:eveHdObs1} and \eqref{eq:eveHdObs2}.
 
\begin{remark}
The legitimate nodes probe the channel \emph{once} per coherence block. Naturally, if the probing is repeated multiple times, the channel estimate is improved. However, in many communication standards, probing is primarily realized by a known preamble once sent before the actual payload data is transmitted. Thus, we assume that only \emph{one} channel estimate per coherence block is available. 
\end{remark}

\subsubsection{FD mode}
\vspace{.5em}
Next, we turn our attention to the FD mode. In this mode, $N_y:=1$, $N_z:=1$; and, thus $\tf y=y$ and $\tf z=z$. The channel observations in each block are given by
\begin{subequations}
\label{FD-model}
\begin{align}
\label{eq:aliceFdObs}
x &=\sqrt{\snr}\,h_{ba}+\alpha\sqrt{\snr}\,n_{\text{I}a}+n_a,\\ 
\label{eq:bobFdObs}
y &=\sqrt{\snr}\,h_{ab}+\alpha\sqrt{\snr}\,n_{\text{I}b}+n_b, \\ 
\label{eq:eveFdObs}
z &=\sqrt{\snrae}\,h_{ae}+\sqrt{\snrbe}\,h_{be}+n_e, 
\end{align}
\end{subequations}
where $n_{\text{I}a}$ and $n_{\text{I}b}$ denote the residual SI induced by simultaneous transmission and reception at Alice and Bob, respectively. This is due to the fact that, even after SI cancellation, some portion of the strong transmitter noise may still be present~\cite{src:korpi2014full}. The parameter $\alpha$ denotes the strength of the residual SI compared to the desired received signal, with $\alpha>0$, where $\alpha=1$ denotes the case when the residual SI is on the same power level as the desired signal. The statistics of the channel coefficients $h_{ij}$ hold verbatim as in the HD mode, except, that the penalty of consecutive measurements is no longer relevant since we have zero delay, i.e.,  $\expv\left[ h_{ba}h_{ab} \right]=\rho_{ba}$.
\begin{remark}
The channel observations in \eqref{eq:aliceFdObs} and \eqref{eq:bobFdObs} assume that the residual SI acts as independent noise. It is only linked to the transmitted signal by the power level, i.e., the second-order moment. This is an approximation, since the actual residual SI might still have some functional relationship to the transmitted signal. However, in this model, we assume that the SI cancellation is able to remove all parts of residual SI that depend on the transmitted signal directly.
\end{remark}

\subsection{Reconciliation phase} 
\vspace{.5em}
%This phase serves as the reconciliation step for the key agreement protocol. 
In this phase Alice and Bob use the remaining $(1-\beta)n$ blocks to convey public messages. Alice delivers information on the key reconciliation by the message $M_a\in\mathcal{M} = \{1,\hdots,2^{nR_{r}}\}$  to help Bob in key generation. We briefly discuss the role of the reconciliation, for details the reader may refer to~\cite{src:chou2014separation}. During the probing phase, Alice and Bob obtain $\beta n$ observations from channel estimation. These observations are correlated, but may not be equal and therefore require some  \textit{alignment}  which can be done through  public communication. Subsequently, at the beginning of the reconciliation phase, Alice quantizes its observation sequence and chooses a message for Bob as a function of the quantized sequence in the spirit of Wyner-Ziv coding with side information at the receiver~\cite{src:el2011network}. Since the channel demands a rate constraint, the quality of quantization allows to control the required reconciliation rate. 

We now elaborate on the role of  Bob. It is evident that FD capability is meaningful only in the context of two-way communication. However, as mentioned before,  the reconciliation of Alice-to-Bob is a one-way transmission. In practical situations, the transmissions should be designed in a two-way fashion to fully utilize the leverage offered by FD transmission. Therefore, we assume that not only Bob is receiving the key reconciliation message from Alice, it is allowed to broadcasts a public message to another external node. We provide a simple example to highlight the impact of this communication.
\begin{example}
Consider a setup for the 3-user key agreement system as shown in Figure~\ref{fig:example}. 
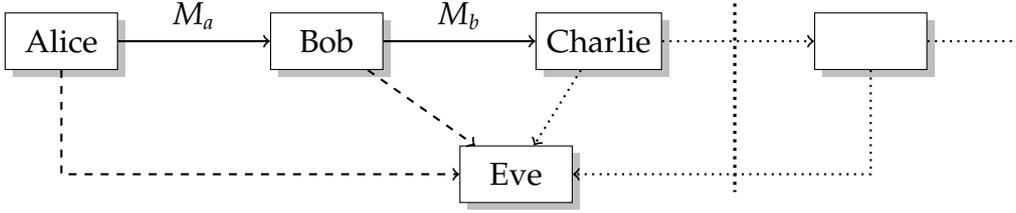
\begin{figure}
\centering %\epsfig{file=./fig/sys.eps, width=0.6\textwidth}
\begin{tikzpicture}[
    block/.style={draw, drop shadow, fill=white, rectangle, minimum
height=0.75cm, minimum width=3.5em},
	]
	\node[block] (alice) at (0,0) {Alice};	
	\node[block,right=2cm of alice] (bob) {Bob};
	\node[block,below right=1cm and 1cm of bob] (eve) {Eve};
	\node[block,right=2cm of bob] (charlie) {Charlie};
	\node[block,right=2cm of charlie] (next) {};
	
	\draw[thick,->] (alice) to node[above] {$M_a$} (bob);
	\draw[thick,->] (bob) to node[above] {$M_b$} (charlie);
	\draw[thick,->,dotted] (charlie) to (next);
	\draw[thick,->,dotted] (next) to ($(next) + (2,0)$);
	\draw[thick,->,dashed] (alice) |-  (eve);
	\draw[thick,->,dashed] (bob) to  (eve);
	\draw[thick,->,dotted] (charlie) to  (eve);
	\draw[thick,->,dotted] (next) |-  (eve);
	\draw[very thick,dotted] ($(charlie)!0.5!(next) + (0,0.5)$) to
($(charlie)!0.5!(next) + (0,-2)$);
\end{tikzpicture}
\caption{Example of reconciliation for 3-user key agreement.}
\label{fig:example}
\end{figure}
This model consists of four nodes, three legitimate users Alice, Bob, Charlie and the eavesdropper Eve.  The key agreement protocol is depicted in Figure~\ref{fig:threeprotocol} for the FD mode. 
\begin{figure}
\centering %\epsfig{file=./fig/sys.eps, width=0.6\textwidth}
\def\blockheight{1.75}
\def\blockwidth{8.5}
\begin{tikzpicture}[
    block/.style={draw, drop shadow, fill=white, rectangle, minimum
height=\blockheight cm, minimum width=\blockwidth cm},
    subblock/.style={draw, fill=white, rectangle, minimum
height=\blockheight*0.5 cm, minimum width=\blockwidth*0.5 cm},
    subblock2/.style={draw, fill=white, rectangle, minimum
height=\blockheight*0.5 cm, minimum width=\blockwidth*0.75 cm},
	]
	\matrix(struct)[matrix of nodes,ampersand replacement=\&,%below=of game,
 %,text width=3cm	
%	    row 3/.style={anchor=south,text width=3cm},
%	    row 2/.style={nodes={text width=1cm,align=center}},
	    row 1/.style={nodes={text width=4cm,align=center,minimum
height=1cm, minimum width=2cm}}
	] at (0,0) {
	   Probing Phase \& Reconciliation Phase \\
	   \node[rectangle, minimum
	   height=\blockheight cm, minimum width=\blockwidth cm] (prob) {}; \&
\node[block, minimum
width=\blockwidth*0.75 cm] (rec){}; \\
	};
	
	\node[subblock,drop shadow] at
($(prob.center)+(-\blockwidth*0.25,\blockheight*0.25)$) {Alice
$\Longleftrightarrow$ Bob};
	\node[subblock,drop shadow] at
($(prob.center)+(\blockwidth*0.25,-\blockheight*0.25)$) {Bob
$\Longleftrightarrow$ Charlie};
	\node[subblock2] at ($(rec.center)+(0,\blockheight*0.25)$) {Alice
$\Longrightarrow$ \textcolor{blue}{Bob}  };
	\node[subblock2] at ($(rec.center)+(0,-\blockheight*0.25)$)
{\textcolor{blue}{Bob} $\Longrightarrow$ Charlie};
\end{tikzpicture}
\caption{A sketch of the transmission protocol for the 3-user key
agreement system.} \label{fig:threeprotocol}
\end{figure}
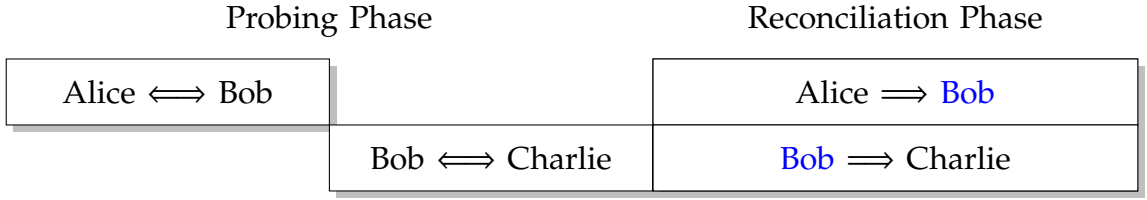
We assume that during the probing phase, the user pairs Alice-Bob and Bob-Charlie probe the channel in different coherence blocks. Therefore, the observation sequences of Alice-Bob and Bob-Charlie pairs are independent. In the reconciliation phase, Alice transmits the message $M_a$ to Bob, and, Bob sends $M_b\in\mathcal{M} = \{1,\hdots,2^{nR_{r}}\}$ to Charlie for key reconciliation. As a consequence, Bob is both transmitting and receiving information at the same time in FD mode, or, in HD mode, by time-sharing. This argument can be extended to the arbitrary large independent key agreement networks.  \flushright \IEEEQED
\end{example}

\noindent For the sake of simplicity, we restrict ourselves to a single setup of key agreement. The ingredients that remain crucial are that Bob is actively using the transmitter while receiving the reconciliation message from Alice.
Thus, the Alice-to-Bob link achieves the reconciliation rate $R_{r}$ as
\begin{align}
%R_p &\geq \limsup_{n\to\infty} \frac{1}{n}\log_2|\mathcal{M}|, \\
R_{r} &\geq \frac{1}{n}H(M_a).
%R_{sk} \leq \liminf_{n\to\infty} \frac{1}{n}\log_2|\mathcal{K}|.
\end{align}
We assume that Bob has only imperfect CSI available. Thus, the communication for reconciliation cannot achieve the capacity of a fast-fading channel with perfect CSI~\cite{src:tse2005fundamentals}, but rather a lower bound on it.
%If Alice encodes her message by inputs\todo{more precise on inputs}, a lower bound on the capacity with imperfect CSI at the receiver is known~\cite{src:medard2000effect}.

%If Alice encodes her message by a codeword of i.i.d. Gaussian inputs\todo{more precise on inputs}, a lower bound on the capacity with imperfect CSI at the receiver is known~\cite{src:medard2000effect}. 

\subsubsection{HD mode}
\vspace{.5em}
At Bob, the channel input-output relationship is given by
\begin{align}
y_{r}=h_{ab}x_r+n_r,
\label{eq:commHd}
\end{align}
where $x_r\sim\mathcal{N}\left(0, \snr\right)$ is independent of $h_{ab}$,
and $n_r\sim\mathcal{N}\left(0, 1\right)$.
%Similarly as in the estimation phase, Bob obtains a noisy version of the channel state from probing, depicted in eq.~\eqref{eq:bobHdObs}.
%Consequently, the channel state variable is given by
%\begin{align}
%\label{eq:channelSep1}
%h_{ab}=\hat{h}_{ab}+\tilde{h}_{ab},
%\end{align}
We define $\hat{h}_{ab}$ as the part of the channel state that the receiver estimates
by minimum-mean squared error (MMSE) estimation. The variance of the estimation error is given by
%and $\tilde{h}_{ab}\sim\mathcal{N}\left(0, \sigma^2_{\text{HD}}\right)$ with
\begin{align}
\label{sigma-hd}
\sigma^2_{\text{HD}}=1-\frac{\expv \left[ h_{ab} y_r\right]^2}{\expv
\left[ y_r^2 \right]}=\frac{1}{1+\snr}.
\end{align} 
%We assume that Bob, by some means, can obtain perfect CSI during this phase. However, this assumptions is not too strict since Eve, by the same means, can have perfect CSI as well, because there is no longer a security constraint. First of all, we look into the case of HD mode. As is it well known from the literature, the capacity of

\subsubsection{FD mode}
Recall that, since Bob is transmitting and receiving at the same time, in FD mode the communication is subjected to self-interference, and thus Bob receives
\begin{align}
y_{r}=h_{ab}x_r+\alpha\sqrt{\snr}\; n_{\text{I}r} + n_r,
\label{eq:commFd}
\end{align}
where  $x_r\sim\mathcal{N}\left(0, \snr\right)$, $n_{\text{I}r}\sim\mathcal{N}\left(0, 1\right)$ is the residual noise after interference cancellation and $n_r\sim\mathcal{N}\left(0, 1\right)$ is independent additive white Gaussian noise. The receiver obtains imperfect channel knowledge by MMSE from probing in the same manner as in the HD case. The variance of the estimation error yields %the estimation phase, depicted in eq.~\eqref{eq:bobFdObs}. The estimation 
%\tc{red}{Subsequently, we partition the channel state as in~\eqref{eq:channelSep1}, but now we have $\tilde{h}_{ab}\sim\mathcal{N}\left(0, \sigma^2_{\text{m,FD}}\right)$ with}
\begin{align}
\label{sigma-fd}
\sigma^2_{\text{FD}}=\frac{1+\alpha^2\snr}{1+\snr(1+\alpha^2)}.
\end{align}
%for the estimation error. 

\subsection{Key generation}
Following the probing and reconciliation phase, Alice computes a secret key $K_a\in\mathcal{K}$ from the observations $x^{\beta n}$,  where $\mathcal{K}=\left\lbrace 1,\ldots,2^{nR_{sk}} \right\rbrace$ denotes a finite alphabet set and $R_{sk}$ is the secret-key rate. From the public message $M_a$ and the observations $\bm y^{\beta n}$, Bob obtains a key $K_b\in\mathcal{K}$, while Eve tries to reconstruct the key from observations $\bm z^{\beta n}$ and the public message $M_a$. The performance is measured by the probability of error
\begin{align}
\label{eq:probError}
\lim\limits_{n\to\infty}\Pr\left\lbrace K_a \neq K_b \right\rbrace =0,
\end{align}
and both strong uniformity and secrecy~\cite[p. 116]{src:bloch2011physical}
\begin{align}
\label{eq:uniSecrecy}
\lim\limits_{n\to\infty}\left[\log_2\lceil 2^{nR_{sk}}\rceil - H\left(K_a|M_a,\bm z^{\beta n}\right)\right]=0.
\end{align}
For arbitrarily large $n$, the secret-key rate $R_{sk}$ is achievable, if \eqref{eq:probError} and \eqref{eq:uniSecrecy} are satisfied.
\begin{definition}
The secret-key rate~$R_{sk}$ with respect to a certain reconciliation rate~$R_r$, is denoted by the key-reconciliation function~$R_{sk}\left(R_r\right)$.
\end{definition}

\section{Computation of Secret Key Rates} %\todo{Need a better title}
\label{sec:rates}
In this section, we derive the secret-key rates that serve as a metric for performance evaluation. %\vspace{-3.5em}
\subsection{Reconciliation rates}
First, we provide the description to compute the reconciliation rate expression for both HD and FD modes. We consider a public transmission from Alice to Bob during the reconciliation phase. This communication utilizes the leverage offered by FD capability, since the link Alice-to-Bob obtains a rate improvement by a factor of one-half compared to HD mode~\cite{src:ahmed2013rate}. 
 \vspace{-1em}
\subsubsection{HD mode}
\vspace{-.25em}
 For the HD mode in~\eqref{eq:commHd}, the reconciliation rate per block between Alice-to-Bob channel is given by 
\begin{align}
\label{eq:rphdintro}
\rrhd = \frac{\left(1-\beta\right)}{2} I(x_r;y_r|\hat{h}_{ab}),
\end{align}
where the factor of one-half represents the resource share in HD mode. 
%where the additional factor of one-half captures the impact of resource sharing in HD mode. We assume that, in addition to the public communication required for key reconciliation, the legitimate receiver Bob, also has to send information  to the secondary node Charlie. Therefore, Bob always split half of time or frequency resources between the transmission of key for reconciliation and the message to Charlie. Consequently, only half of the allowed rate is supported by the Alice-to-Bob link.

\noindent In what follows, we provide an upper bound on the reconciliation rate of~\eqref{eq:rphdintro}. We can write
\begin{align}
I(x_r;y_r|\hat{h}_{ab})
%h(x_r)-h(x_r|y_r,\hat{h}_{ab}) \nonumber \\
&\overset{(a)}{\leq} I(x_r;y_r|h_{ab})\nonumber \\
%h(x_r)-h(x_r|y_r,\hat{h}_{ab},\tilde{h}_{ab}) \nonumber \\
%\label{point-to-pointhd}
%&= 
\label{eq:muiHD}
&=\frac{1}{2}\expv_{h_{ab}}\left[\log_2\left(1+h_{ab}^2\snr\right)\right],
\end{align}
where $(a)$ follows since conditioning reduces entropy. Subsequently, by plugging~\eqref{eq:muiHD} into~\eqref{eq:rphdintro}, we get
%The achievable communication rate is given by averaging~\eqref{eq:muiHD} over all possible realizations of $h_{ab}$ and is given by
%\begin{align}
%\label{eq:rphdintro}
%\rrhd = \frac{\left(1-\beta\right)}{2}\expv_{{h}_{ab}}\left[ \log_2\left(1+h_{ab}^2\snr\right) \right],
%\end{align}
%Let
%\begin{align*}
%c_{\text{HD}}:=\frac{\snr(1+\snr)}{1+2\snr},
%\end{align*}
%with some positive constants $c_1, c_2$ and $h\sim\mathcal{N}\left(0,1\right)$, an upper bound is easily obtained by applying Jensen's inequality
%\begin{align}
%R&\leq c_1\log_2\left(1+c_2\expv_{h}\left[ h^2\right]\right)\nonumber\\
%\label{eq:rateup}
% &=c_1\log_2\left(1+c_2\right)=:\overline{R},
%\end{align}
%then we can write~\eqref{eq:rphdintro} as
\begin{align}
\rrhd & \leq \frac{1-\beta}{4}\expv_{h_{ab}}\left[\log_2\left(1+ h_{ab}^2\snr \right)\right]\nonumber\\
& \overset{(b)}{\leq}\frac{1-\beta}{4}\log_2\left(1+ \expv_{h_{ab}}\left[h_{ab}^2\right]\snr \right)\nonumber\\
\label{eq:rpHdBound}
&= \frac{1-\beta}{4}\log_2\left(1+ \snr \right)=:\rrhdup,
\end{align}
where $(b)$ follows from Jensen's inequality~\cite[p. 27]{src:cover2012elements}.

\subsubsection{FD mode}
In contrast to HD mode, with FD capabilities Bob does not need to split frequency or time resources. Thus, the  Alice-to-Bob link is able to support twice the rate compared to the HD mode. The supported reconciliation rate per block between Alice-to-Bob channel is given by
%\ifCLASSOPTIONdraftcls
%\begin{align}
%\label{eq:rpfdintro}
%&\rrfd= 
%\frac{\left(1-\beta\right)}{2}\expv_{\hat{h}_{ab}}\left[ \log_2\left(1+\hat{h}_{ab}^2\frac{\snr}{1+(\sigma^2_{\text{FD}}+\alpha^2)\snr}\right) \right].
%\end{align}
%\else
%\begin{align}
%\label{eq:rpfdintro}
%&\rrfd= 
%\nonumber\\&=
%\frac{\left(1-\beta\right)}{2}\expv_{\hat{h}_{ab}}\left[ \log_2\left(1+\hat{h}_{ab}^2\frac{\snr}{1+(\sigma^2_{\text{FD}}+\alpha^2)\snr}\right) \right].
%\end{align}
%\fi
\begin{align}
\label{eq:rpfdintro}
&\rrfd= 
\left(1-\beta\right) I(x_r;y_r|\hat{h}_{ab}).
\end{align}
  From~\eqref{eq:rpfdintro}, we get
\begin{align}
\label{point-to-point}
I(x_r;y_r|\hat{h}_{ab})&=h(x_r)-h(x_r|y_r,\hat{h}_{ab}).
\end{align}
The right-hand side term in~\eqref{point-to-point} can be bounded as 
\begin{align}
h(x_r|y_r,\hat{h}_{ab}) &\overset{(c)}{=} h(x_r-cy_r|y_r,\hat{h}_{ab}) \nonumber\\
&\overset{(d)}{\leq} h(x_r-cy_r|\hat{h}_{ab}) \nonumber\\
\label{eq:entropyHDbound}
&\overset{(e)}{\leq} \frac{1}{2}\expv_{\hat{h}_{ab}}\left[\log_2\left( (2\pi e) \Var\left[ x_r-cy_r |\hat{h}_{ab}\right] \right)\right], 
%&\overset{(d)}{=} \frac{1}{2}\expv_{\hat{h}_{ab}}\left[\log_2\left(2\pi e\frac{\snr\left(1+\snr(\sigma_{\text{FD}}^2+\alpha^2)\right)}{1+\snr(\sigma_{\text{FD}}^2+\alpha^2+\hat{h}_{ab}^2)}\right)\right],
\end{align} %\todo{replace (a) with $(a)$ everywhere}
where $(c)$ follows since adding a constant with $c>0$ does not change entropy, $(d)$ follows due to the fact that conditioning reduces entropy; and, $(e)$ follows because Gaussian distribution maximizes conditional entropy for a fixed variance. The quantity $\Var\left[ x_r-cy_r |\hat{h}_{ab}\right]$ is given by
\begin{align}
\label{eq:varhd}
\Var\left[ x_r-cy_r  |\hat{h}_{ab}\right]=
\frac{\snr\left(1+\snr(\sigma_{\text{FD}}^2+\alpha^2)\right)}{1+\snr(\sigma_{\text{FD}}^2+\alpha^2+\hat{h}_{ab}^2)},
\end{align}
where $c$ is chosen by minimizing the  mean-square error (MSE) when estimating $x_r$ from $y_r$, and is
\begin{align}
c&=\frac{\hat{h}_{ab} \snr}{1+\snr(\sigma^2_{\text{FD}}+\alpha^2+\hat{h}_{ab}^2)}.
\end{align}
Then, by plugging~\eqref{eq:varhd} in~\eqref{eq:entropyHDbound}, and replacing it in~\eqref{point-to-point}, we get
\begin{equation}
\label{eq:muiFD}
\hspace{-1em}I(x_r;y_r|\hat{h}_{ab})\geq  \hspace{.5em}\frac{1}{2}\expv_{\hat{h}_{ab}}\left[\log_2\left( 1+\hat{h}_{ab}^2\frac{\snr(1+\snr(1+\alpha^2))}{1+2\snr+\alpha\snr(1+\snr)}\right)\right].
%&= \frac{1}{2}\log_2\left(1+\hat{h}_{ab}^2\frac{\snr(1+\snr)}{1+2\snr}\right).
\end{equation}
%where $(a)$ follows by \eqref{eq:entropyHDbound}.
Finally, the reconciliation rate for FD is given by
\begin{align}
\label{eq:rpfdInter}
\rrfd &\geq \frac{1-\beta}{2}\expv_{\hat{h}_{ab}}\left[ \log_2\left(1+c_{\text{FD}}\hat{h}_{ab}^2\right) \right]
\end{align}
where for convenience, we set 
\begin{align}
\label{eq:cfd}
c_{\text{FD}}&:=  \frac{\snr(1+\snr(1+\alpha^2))}{1+2\snr+\alpha\snr(1+\snr)}.
\end{align} 
We can further bound~\eqref{eq:rpfdInter} as follows.
\begin{align}
\rrfd&\geq\frac{1-\beta}{2}\expv_{\hat{h}_{ab}}\left[ \log_2\left(1+c_{\text{FD}}\exp\left(\ln \hat{h}_{ab}^2\right)\right) \right] \nonumber\\
\label{eq:rpFdBoundIntermediate}
& \overset{(f)}{\geq} \frac{1-\beta}{2} \log_2\left(1+c_{\text{FD}}\exp\left(\expv_{\hat{h}_{ab}}\left[\ln \hat{h}_{ab}^2\right]\right)\right), 
%&\overset{(b)}{=} c_1 \log_2\left(1+c_2\exp\left(\Psi\left(0.5\right)+\ln 2\right)\right) \nonumber\\
%\label{eq:ratelb}
%&\overset{(c)}{=} c_1 \log_2\left(1+\frac{c_2}{2}e^{-\gamma}\right)
%=:\underline{R},
\end{align}
\noindent where $(f)$ follows due to the fact that $f(x)=\log_2\left(1+ce^{x}\right)$ is a convex function~\cite{src:oyman2002tight}. The expectation over $\expv_{\hat{h}_{ab}}\left[\ln \hat{h}_{ab}^2 \right]$  yields
\begin{align}
\expv_{\hat{h}_{ab}}\left[\ln \hat{h}_{ab}^2 \right] &= \frac{1}{\sqrt{2\pi\sigma^2_{\hat{h}_{ab}}}}\int_{\mathbb{R}}\ln x^2 \exp\left(-\frac{x^2}{2\sigma^2_{\hat{h}_{ab}}}\right) \text{d}x \nonumber\\
&\overset{(g)}{=} \ln\sigma^2_{\hat{h}_{ab}} + \frac{1}{\sqrt{2\pi}}\int_{\mathbb{R}}\ln y^2 \exp\left(-\frac{y^2}{2}\right) \text{d}y \nonumber\\
&\overset{(h)}{=} \ln\sigma^2_{\hat{h}_{ab}} + \Psi\left(0.5\right)+\ln 2 \nonumber\\
\label{eq:expClosed}
&\overset{(i)}{=} \ln\frac{\sigma^2_{\hat{h}_{ab}}}{2} -\gamma. 
\end{align}
where $(g)$ holds due to the substitution $x=\sigma_{\hat{h}}y$, $(h)$ follows because the integral can be solved in closed form~\cite{src:lee2012bayesian} and is given by the Digamma function $\Psi(x)$, and $(i)$ follows from the Euler-Mascheroni constant $\gamma\approx 0.57721$. Replacing~\eqref{eq:expClosed} in \eqref{eq:rpFdBoundIntermediate}, we get
\begin{align}
\rrfd &\geq \frac{1-\beta}{2} \log_2\left(1+c_{\text{FD}}\frac{\sigma^2_{\hat{h}_{ab}}}{2}e^{-\gamma}\right) \nonumber\\
&\overset{(j)}{=} \frac{1-\beta}{2} \log_2\left(1+\frac{1}{2}e^{-\gamma}\frac{\snr^2}{1+2\snr+\alpha^2\snr(1+\snr)}\right) \nonumber \\
\label{eq:rpFdBound}
&=:\rrfdlow,
%\label{eq:ratelb}
%&\overset{(c)}{=} c_1 \log_2\left(1+\frac{c_2}{2}e^{-\gamma}\right)
%
\end{align}
where $(j)$  follows since
\begin{align*}
\sigma^2_{\hat{h}_{ab}}=1-\sigma^2_{\text{FD}}=\frac{\snr}{1+\alpha^2\snr+\snr}.
\end{align*}

\subsection{Secret-key rates}
We now provide the secret-key rate for the Gaussian sources with rate-limited public communication for key reconciliation.
\begin{proposition}\label{secret-key}
Let Alice, Bob and Eve observe a zero-mean Gaussian multiple vector source $\left(x,\tilde{\bm y},\tilde{\bm z}\right)$ 
\begin{subequations}
\label{model-ref}
\begin{align}
\label{eq:normSrcY}
\tilde{\bm y} &= \bm b x + \bm w_y, \\
\label{eq:normSrcZ}
\tilde{\bm z} &= \bm e x + \bm w_z,
\end{align}
\end{subequations}
where $\bm b\in \mathbb{R}^{N_y}$, $\bm e\in \mathbb{R}^{N_z}$, $x\sim\mathcal{N}\left(0, \sigma^2_x\right)$, $\bm w_y\sim\mathcal{N}\left(\bm 0, \bm I_{N_y}\right)$ and $\bm w_z\sim\mathcal{N}\left(\bm 0, \bm I_{N_z}\right)$, and recall that $\beta$ denotes the ratio of coherence blocks available for collecting reciprocal channel states at each node. Then, the rate region $\left(R_{sk}, R_r\right)$ is the union of all achievable rate pairs satisfying
\begin{subequations}
\label{CR-KR}
\begin{align}
\label{eq:commRate}
R_r & \geq \frac{\beta}{2}\log_2 \frac{\sigma^2_x}{\sigma^2_{x|u}}-\frac{\beta}{2}\log_2 \left|\frac{\bm I_{N_y}+\bm b \bm b^T\sigma^2_x}{\bm I_{N_y}+ \bm b\bm b^T\sigma^2_{x|u}}\right|, \\
\label{eq:keyRate}
R_{sk} & \leq \frac{\beta}{2}\log_2 \left|\frac{\bm I_{N_y}+\bm b \bm b^T\sigma^2_x}{\bm I_{N_y}+\bm b \bm b^T\sigma^2_{x|u}}\right|-\frac{\beta}{2}\log_2 \left|\frac{\bm I_{N_z}+\bm e \bm e^T\sigma^2_x}{\bm I_{N_z}+\bm e \bm e^T\sigma^2_{x|u}}\right|,
\end{align} 
\end{subequations}
for some $\sigma^2_x\geq\sigma^2_{x|u}>0$, where $\sigma^2_{x|u}$ denotes the conditional covariance of $x$ given a (Gaussian) random variable $u$.
\end{proposition}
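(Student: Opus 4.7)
The plan is to reduce the statement to the known single-letter characterization of one-way rate-limited secret-key agreement from a source, and then specialize to the jointly Gaussian case through direct differential-entropy calculations.

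First I would invoke the classical characterization~\cite{src:watanabe2011secret,src:csiszar2000common}: for a memoryless source $(x,\tilde{\bm y},\tilde{\bm z})$ with Alice observing $x$, Bob observing $\tilde{\bm y}$, Eve observing $\tilde{\bm z}$, and a single rate-limited public message from Alice to Bob, the pair $(R_{sk},R_r)$ is achievable if and only if there exists an auxiliary $u$ satisfying the Markov chain $u\to x \to (\tilde{\bm y},\tilde{\bm z})$ together with $R_r \geq I(u;x) - I(u;\tilde{\bm y})$ and $R_{sk} \leq I(u;\tilde{\bm y}) - I(u;\tilde{\bm z})$. Achievability is obtained by a Wyner-Ziv binning step for reconciliation composed with a privacy amplification layer, while the converse is standard.

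Second, since the observations in~\eqref{model-ref} are jointly Gaussian, I would appeal to the optimality of Gaussian auxiliaries for vector Gaussian source-type models established in~\cite{src:watanabe2011secret}, which is typically proven via the enhanced-channel technique and a conditional entropy power argument. This allows restricting $u$ to be jointly Gaussian with $x$, and any such $u$ is fully characterized by its conditional variance $\sigma^2_{x|u}\in(0,\sigma^2_x]$. Third, with $u$ Gaussian I would evaluate the three mutual informations in closed form via standard Gaussian differential-entropy formulas,
\begin{equation*}
I(u;x) = \tfrac{1}{2}\log_2\!\bigl(\sigma^2_x/\sigma^2_{x|u}\bigr),
\end{equation*}
\begin{equation*}
I(u;\tilde{\bm y}) = \tfrac{1}{2}\log_2\!\left|\frac{\bm I_{N_y}+\bm b\bm b^T\sigma^2_x}{\bm I_{N_y}+\bm b\bm b^T\sigma^2_{x|u}}\right|,
\end{equation*}
with an analogous expression for $I(u;\tilde{\bm z})$. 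Plugging these into the single-letter region and multiplying by the factor $\beta$, which accounts for the fact that only $\beta n$ of the $n$ coherence blocks produce source samples while rates are normalized per total block, yields exactly~\eqref{CR-KR}.

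The main technical obstacle is the second step: proving that Gaussian auxiliaries are simultaneously optimal for the reconciliation-rate constraint and the secret-key objective in the vector-source setting, as the two quantities must be jointly extremized by the same choice of $u$. Once this is granted by invoking~\cite{src:watanabe2011secret}, the remainder is routine matrix algebra on determinants of rank-one perturbations of the identity.
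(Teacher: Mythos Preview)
Your proposal is correct and mirrors the paper's approach: the paper's own proof simply states that \eqref{CR-KR} ``follows along similar lines as in~\cite[Section~V]{src:watanabe2011secret}'' and is omitted, i.e., it too relies on the single-letter one-way rate-limited region together with the optimality of Gaussian auxiliaries from~\cite{src:watanabe2011secret}. Your write-up is in fact more explicit than the paper, spelling out the Markov structure, the Gaussian mutual-information evaluations, and the $\beta$ normalization that the paper leaves implicit.
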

\begin{proof}
The proof of~\eqref{CR-KR} follows along similar lines as in~\cite[Section V]{src:watanabe2011secret} and is omitted for brevity. 
\end{proof}
\vspace{.5em}
\noindent Next, we provide a useful proposition that is used to establish the results in this work.
\begin{proposition} %, based on the observations $\left(x,\tilde{\bm y},\tilde{\bm z}\right)$$\left(x,\bm y,\bm z\right)$
\label{sec:prop1}
There exist non-singular matrices $\bm A_y\in\mathbb{R}^{N_y\times N_y}$, $\bm A_z\in\mathbb{R}^{N_z\times N_z}$, for which the achievable rate region $\left(R_r,R_{sk}\right)$ defined in \eqref{eq:commRate} and \eqref{eq:keyRate}, also holds for~\eqref{HD-model} and~\eqref{FD-model}, respectively. %introduced in the system model of section \ref{sec:systemmodel}.
\end{proposition}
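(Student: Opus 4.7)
The plan is to reduce each of the HD model~\eqref{HD-model} and the FD model~\eqref{FD-model} to the canonical Gaussian vector source of Proposition~\ref{secret-key} by applying invertible linear transformations on Bob's and Eve's observation vectors, and then to invoke the invariance of mutual information under such transformations to lift the rate region \eqref{eq:commRate}--\eqref{eq:keyRate} back to the original models.

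First, I would note that in both modes the triple $(x,\bm y,\bm z)$ is jointly Gaussian and zero-mean, since $x$, the channel coefficients $h_{ij}$, and all additive noises (including the residual self-interference in FD) are jointly Gaussian with zero mean. Its joint law is therefore completely determined by its covariance, whose entries are read off directly from \eqref{eq:aliceHdObs}--\eqref{eq:eveHdObs2} and \eqref{eq:aliceFdObs}--\eqref{eq:eveFdObs}. Performing the MMSE decomposition with respect to $x$,
\begin{align*}
\bm y=\bm b_0\,x+\bm n_y, \qquad \bm z=\bm e_0\,x+\bm n_z,
\end{align*}
with $\bm b_0=\expv[\bm y\, x]/\sigma_x^{2}$, $\bm e_0=\expv[\bm z\, x]/\sigma_x^{2}$, and $\bm n_y,\bm n_z$ jointly Gaussian residuals that are uncorrelated with, and hence independent of, $x$, reduces the task to whitening the residual covariances $\bm\Sigma_y:=\expv[\bm n_y\bm n_y^T]$ and $\bm\Sigma_z:=\expv[\bm n_z\bm n_z^T]$. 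I would therefore set $\bm A_y:=\bm\Sigma_y^{-1/2}$ and $\bm A_z:=\bm\Sigma_z^{-1/2}$, which are non-singular whenever $\bm\Sigma_y,\bm\Sigma_z$ are positive definite. Then $\tilde{\bm y}:=\bm A_y\bm y=\bm b\,x+\bm w_y$ and $\tilde{\bm z}:=\bm A_z\bm z=\bm e\,x+\bm w_z$ with $\bm b:=\bm A_y\bm b_0$, $\bm e:=\bm A_z\bm e_0$, $\bm w_y\sim\mathcal{N}(\bm 0,\bm I_{N_y})$, $\bm w_z\sim\mathcal{N}(\bm 0,\bm I_{N_z})$ independent of $x$, which is exactly the canonical form \eqref{eq:normSrcY}--\eqref{eq:normSrcZ} assumed in Proposition~\ref{secret-key}.

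The final step is to observe that because $\bm A_y$ and $\bm A_z$ are deterministic and invertible, the $\sigma$-algebras generated by $\bm y$ and $\tilde{\bm y}$ (respectively $\bm z$ and $\tilde{\bm z}$) coincide, so all conditional entropies and mutual informations underlying Proposition~\ref{secret-key}, together with the achievability scheme borrowed from \cite[Section~V]{src:watanabe2011secret}, are left unchanged. Consequently the rate region \eqref{CR-KR} computed for $(x,\tilde{\bm y},\tilde{\bm z})$ is achievable for the original $(x,\bm y,\bm z)$ in both HD and FD modes. The main obstacle I anticipate is verifying the strict positive definiteness of the conditional covariances: in the FD mode both $\bm\Sigma_y$ and $\bm\Sigma_z$ are scalars and the claim is immediate from the strictly positive noise and residual self-interference powers, while in the HD mode $\bm\Sigma_z$ is $2\times 2$ and one must check that the correlations $\rho_{ae},\rho_{be},\rho_{e}$ together with the SNR parameters never render the joint distribution of $(z_1,z_2,x)$ degenerate; under the paper's standing assumptions of unit channel and noise variances, finite SNRs, and correlation magnitudes strictly below one, this positive definiteness holds and the whitening transformations $\bm\Sigma^{-1/2}$ are well defined.
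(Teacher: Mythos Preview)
Your proposal is correct and follows essentially the same approach as the paper: both arguments connect the actual Gaussian observations to the canonical form \eqref{eq:normSrcY}--\eqref{eq:normSrcZ} by an invertible linear map determined by a square root of the conditional covariance $\bm\Sigma_{\bm z|x}=\bm\Sigma_{\bm z}-\sigma_x^{-2}\bm\sigma_{\bm zx}\bm\sigma_{\bm zx}^{T}$ (you take $\bm A_z=\bm\Sigma_{\bm z|x}^{-1/2}$ and whiten, the paper solves $\bm A_z\bm A_z^{T}=\bm\Sigma_{\bm z|x}$ and matches moments in the reverse direction). Your caveat about positive definiteness is exactly the exclusion the paper records in the remark following the proposition, so nothing is missing.
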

\begin{IEEEproof}
The computation of the achievable rate region in~\eqref{CR-KR} for the model~\eqref{model-ref} depends on joint probability distributions only through marginals $\left(x,\bm y\right)$ and $\left(x,\bm z\right)$~\cite[Appendix C]{src:watanabe2010secret}. Let $\bar{\bm y}$ and $\bar{\bm z}$ be the Gaussian random variables with the same second-order moments as $\bm y$ and $\bm z$, respectively, and the same joint statistics with $x$. In what follows, we elaborate the connection between $\bar{\bm z}$  and $\bm z$. Similar arguments can be used to show the relationship between $\bar{\bm y}$  and $\bm y$ and is omitted. Let 
 \begin{align}
 \label{eq:rvZbar}
 \bar{\bm z} &:= \bm A_z\tilde{\bm z}=\bm A_z\bm e x + \bm A_z\bm w_z,
 %\tilde{\bm Z} &= \bm A_z\bm e X + \bm A_z\bm W_z,
 \end{align}
 where  $\bm A_z\in\mathbb{R}^{N_z\times N_z}$ is non-singular. Since $\bm A_z$ is invertible, the transformation $\bm A_z\tilde{\bm z}$ provides a sufficient statistic and is information lossless. %Furthermore, mutual information is invariant to any linear transformation with a non-singular matrix. Therefore, any mutual information regarding $\bar{\bm z}$ yields the same result as for $\tilde{\bm z}$.
Subsequently, we need the same joint and second-order marginal statistics of $\bm z$ and $\bar{\bm z}$, i.e., 
\begin{align}
\bm\sigma_{\bm z x} &= \expv\left[ \bm z x \right] \notag\\
\label{eq:aze}
&\overset{!}{=} \expv\left[ \bar{\bm z} x \right] =\bm A_z\bm e\sigma_x^2,
\end{align}
\vspace{-0.5em}
and
\begin{align}
\bm\Sigma_{\bm z} &= \expv\left[ \bm z \bm z^T \right] \notag\\
&\overset{!}{=} \expv\left[ \bar{\bm z} \bar{\bm z}^T \right] \notag\\
&= \bm A_z\bm e\left(\sigma_x^2\bm A_z\bm e\right)^T+\bm A_z\bm A_z^T\notag\\
\label{eq:covarz}
&=\sigma^{-2}_x\bm \sigma_{\bm zx}\bm \sigma_{\bm zx}^T+\bm A_z\bm A_z^T.
\end{align}
%which leads to the relation\phantom{{}=\expv\left[ \bar{\bm Z} \bar{\bm Z}^T\right]}
%\begin{align}
%\label{eq:aze}
%\bm A_z \bm e = \bm \Sigma_{\bm zx}\sigma^{-2}_x.
%\end{align}
From \eqref{eq:covarz}, we get
\begin{align}
\label{eq:azaz}
\bm A_z \bm A_z^T = \bm \Sigma_{\bm z}-\bm \sigma_{\bm zx}\sigma^{-2}_x\bm \sigma_{\bm zx}^T=:\bm \Sigma_{\bm z|x}.
\end{align}
\end{IEEEproof} 
\begin{remark}
By construction, $\bm \Sigma_{\bm z}$, $\bm \Sigma_{\bm z|x}$ have to be positive semi-definite matrices. However, in order to satisfy the conditions of Proposition~\ref{sec:prop1}, we also need to exclude all the cases where $\bm \Sigma_{\bm z}$, $\bm \Sigma_{\bm z|x}$ have at least one eigenvalue of zero. % and treat such cases separately. 
\end{remark}
The next step is to compute the parameter vectors $\bm b$ and $\bm e$. We now provide the computation of $\lVert\bm e\rVert^2$. The computation of $\lVert\bm b\rVert^2$ follows analogously and is omitted. Starting from \eqref{eq:aze}, the squared norm of the $\bm e$ is given by
\ifCLASSOPTIONdraftcls
\begin{align}
\lVert\bm e\rVert^2 &=\sigma^{-4}_x\bm\sigma_{\bm z x}^T\left(\bm A_z\bm A_z^{T}\right)^{-1}\bm\sigma_{\bm z x} \nonumber \\
&\overset{(a)}{=} \sigma^{-4}_x\bm\sigma_{\bm zx}^T\left(\bm \Sigma_{\bm z}-\bm \sigma_{\bm zx}\sigma^{-2}_x\bm \sigma_{\bm zx}^T\right)^{-1}\bm\sigma_{\bm zx} \nonumber \\
&\overset{(b)}{=} \sigma^{-4}_x\bm\sigma_{\bm zx}^T
\left(
\bm\Sigma_{\bm z}^{-1}\right.\left.+\bm\Sigma_{\bm z}^{-1}\bm\sigma_{\bm zx}
\left(\sigma_{x}^{2}-\bm\sigma_{\bm zx}^T\bm\Sigma_{\bm z}^{-1}\bm\sigma_{\bm zx}\right)^{-1}
\bm\sigma_{\bm zx}^T\bm\Sigma_{\bm z}^{-1}
\right)
\bm\sigma_{\bm zx}\nonumber\\
\label{eq:parametere}
&= \sigma^{-2}_x\frac{\bm\sigma_{\bm zx}^T\bm\Sigma_{\bm z}^{-1}\bm\sigma_{\bm zx}
}{\sigma^2_x-\bm\sigma_{\bm zx}^T\bm\Sigma_{\bm z}^{-1}\bm\sigma_{\bm zx}},
\end{align}
\else
\begin{align}
\lVert\bm e\rVert^2 &=\sigma^{-4}_x\bm\sigma_{\bm z x}^T\left(\bm A_z\bm A_z^{T}\right)^{-1}\bm\sigma_{\bm z x} \nonumber \\
&\overset{(a)}{=} \sigma^{-4}_x\bm\sigma_{\bm zx}^T\left(\bm \Sigma_{\bm z}-\bm \sigma_{\bm zx}\sigma^{-2}_x\bm \sigma_{\bm zx}^T\right)^{-1}\bm\sigma_{\bm zx} \nonumber \\
&\overset{(b)}{=} \sigma^{-4}_x\bm\sigma_{\bm zx}^T
\left(
\bm\Sigma_{\bm z}^{-1}\right. \nonumber\\
&\phantom{{}=}\left.+\bm\Sigma_{\bm z}^{-1}\bm\sigma_{\bm zx}
\left(\sigma_{x}^{2}-\bm\sigma_{\bm zx}^T\bm\Sigma_{\bm z}^{-1}\bm\sigma_{\bm zx}\right)^{-1}
\bm\sigma_{\bm zx}^T\bm\Sigma_{\bm z}^{-1}
\right)
\bm\sigma_{\bm zx}\nonumber\\
\label{eq:parametere}
&= \sigma^{-2}_x\frac{\bm\sigma_{\bm zx}^T\bm\Sigma_{\bm z}^{-1}\bm\sigma_{\bm zx}
}{\sigma^2_x-\bm\sigma_{\bm zx}^T\bm\Sigma_{\bm z}^{-1}\bm\sigma_{\bm zx}},
\end{align}
\fi

\noindent where $(a)$ follows from \eqref{eq:azaz} and $(b)$ is due to the Woodbury identity~\cite[eq. (156)]{src:petersen2012matrix}.  

 Following steps similar to as shown above, $\lVert\bm b\rVert^2$ is given by
\begin{align}
\label{eq:parameterb}
\lVert\bm b\rVert^2= \sigma^{-2}_x\frac{\bm\sigma_{\bm yx}^T\bm\Sigma_{\bm y}^{-1}\bm\sigma_{\bm yx}
}{\sigma^2_x-\bm\sigma_{\bm yx}^T\bm\Sigma_{\bm y}^{-1}\bm\sigma_{\bm yx}}.
\end{align}

\subsection{Key-reconciliation function}
The following proposition provides the secret-key rate as a function of reconciliation rate. %\todo{define the key reconciliation function in systme model}
\begin{proposition}
\label{prop1}
The key-reconciliation function is given by
\ifCLASSOPTIONdraftcls
\begin{align}
\label{eq:keyfunc}
R_{sk}(R_r) = 
 \frac{\beta}{2}\log_2\frac{1-4^{-\frac{R_r}{\beta}}(\lVert\bm b_x\rVert^2-\lVert\bm e_x\rVert^2)+\lVert\bm b_x\rVert^2}{1+\lVert\bm e_x\rVert^2}.
\end{align}
\else
\begin{align}
&\hspace{-.5em} R_{sk}(R_r) = \nonumber\\
\label{eq:keyfunc}
 &\frac{\beta}{2}\log_2\frac{1-2^{-2\frac{R_r}{\beta}}(\lVert\bm b_x\rVert^2-\lVert\bm e_x\rVert^2)+\lVert\bm b_x\rVert^2}{1+\lVert\bm e_x\rVert^2}.
\end{align}
\fi
\end{proposition}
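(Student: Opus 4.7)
The plan is to reduce the vector--matrix expressions of Proposition~\ref{secret-key} to scalar rational functions via the matrix determinant lemma, then explicitly optimize over the auxiliary parameter $\sigma^2_{x|u}$ by enforcing the reconciliation-rate constraint with equality, and finally substitute back into the secret-key-rate expression.

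First I would apply the matrix determinant lemma $|\bm I+\bm u\bm u^{T}|=1+\|\bm u\|^{2}$ to every determinant in Proposition~\ref{secret-key}. Since each $\bm b\bm b^{T}\sigma^{2}$ and $\bm e\bm e^{T}\sigma^{2}$ is a rank-one matrix, both $|\bm I_{N_y}+\bm b\bm b^{T}\sigma^{2}|$ and $|\bm I_{N_z}+\bm e\bm e^{T}\sigma^{2}|$ collapse to $1+\|\bm b\|^{2}\sigma^{2}$ and $1+\|\bm e\|^{2}\sigma^{2}$, respectively. Writing $B:=\|\bm b_x\|^{2}=\sigma_x^{2}\|\bm b\|^{2}$ and $E:=\|\bm e_x\|^{2}=\sigma_x^{2}\|\bm e\|^{2}$, and introducing the normalized variable $t:=\sigma^{2}_{x|u}/\sigma^{2}_x\in(0,1]$, the two bounds in \eqref{CR-KR} become the scalar expressions
\begin{align*}
R_r &\geq \frac{\beta}{2}\log_2\frac{1+Bt}{t(1+B)}, \qquad
R_{sk} \leq \frac{\beta}{2}\log_2\frac{(1+B)(1+Et)}{(1+Bt)(1+E)}.
\end{align*}

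Next I would observe that the right-hand side of the $R_{sk}$ bound is monotone in $t$, so the largest secret-key rate achievable at a prescribed reconciliation rate is attained when the reconciliation inequality holds with equality. Setting $r:=2^{2R_r/\beta}$ and solving $r=(1+Bt)/[t(1+B)]$ for $t$ yields the closed-form expression $t=1/[r(1+B)-B]$. Substituting this $t$ into $(1+Bt)$ and $(1+Et)$ and cancelling the common factor $1/[r(1+B)-B]$ leads, after a short algebraic simplification, to
\begin{align*}
\frac{(1+B)(1+Et)}{(1+Bt)(1+E)}=\frac{r(1+B)-(B-E)}{r(1+E)}=\frac{1+B-r^{-1}(B-E)}{1+E},
\end{align*}
which, once multiplied by $\tfrac{\beta}{2}\log_2(\cdot)$, is exactly the claimed expression \eqref{eq:keyfunc}.

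The main obstacle is essentially bookkeeping: one must verify that the optimization over the admissible range $\sigma^{2}_{x|u}\in(0,\sigma^{2}_x]$ is indeed attained at the boundary of the reconciliation constraint, and that the resulting $t\in(0,1]$ for every $R_r\geq 0$. Monotonicity of $t\mapsto(1+Et)/(1+Bt)$ (increasing when $E\le B$, which always holds since Bob is the legitimate receiver) together with monotonicity of $t\mapsto (1+Bt)/[t(1+B)]$ (decreasing in $t$) makes this immediate, so the equality substitution is without loss of optimality. The remainder is a routine algebraic identity that can be checked directly.
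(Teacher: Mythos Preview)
Your approach is essentially identical to the paper's: both apply the rank-one determinant identity to collapse the matrix expressions in Proposition~\ref{secret-key} to scalars, take the reconciliation constraint \eqref{eq:commRate} with equality, solve for $\sigma^2_{x|u}$ (your normalized $t$), and substitute back into \eqref{eq:keyRate}. Two minor caveats: the map $t\mapsto(1+Et)/(1+Bt)$ is \emph{decreasing} (not increasing) when $E\le B$, and $E\le B$ is not guaranteed a priori but is precisely the condition for a positive key rate (cf.\ Property~\ref{sec:propskpos}); neither point affects the validity of the formula, and the paper in fact omits the monotonicity discussion altogether, simply taking $R_r$ equal to its minimum in \eqref{eq:commRate}.
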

\begin{IEEEproof}
We begin the proof by first considering $R_r$ --- the minimum possible reconciliation rate in~\eqref{eq:commRate} and define
\begin{subequations}
\begin{align}
\label{pb}
\lVert\bm b_x\rVert^2 &=\sigma^2_x\lVert\bm b\rVert^2,\\
\label{pe}
\lVert\bm e_x\rVert^2 &=\sigma^2_x\lVert\bm e\rVert^2.
 \end{align}
\end{subequations}
From the determinant 
%\tc{red}{name of determinant identity} 
identity~\cite[eq. (24)]{src:petersen2012matrix}, we know that 
\begin{align}
\left|\bm I_{N_y}+\bm b \bm b^T\sigma^2_{x|u}\right|=1+\sigma^2_{x|u}\lVert\bm b\rVert^2.
\end{align}
Subsequently, from~\eqref{eq:commRate}, we get 
\begin{align}
\label{eq:aux}
\sigma^2_{x|u}=\frac{\sigma^2_x}{4^{R_r/\beta}\left(1+\lVert\bm b_x\rVert^2\right)- \lVert\bm b_x\rVert^2}.
\end{align}
Finally, by plugging~\eqref{eq:aux} into \eqref{eq:keyRate} and repetitively using the determinant identity~\cite[eq. (24)]{src:petersen2012matrix},
%\tc{red}{name of determinant identity and reference}
we get~\eqref{eq:keyfunc}.
\end{IEEEproof}
 \vspace{.5em}
\noindent From Proposition~\ref{prop1}, we  deduce the following property.
\begin{property}
\label{sec:propskpos}
The key-reconciliation function $R_{sk}(R_r)$ is positive if and only if 
\begin{enumerate}[label=(\roman*)]
\item $R_r>0$,\label{property-R}
\item $\lVert\bm b_x\rVert^2>\lVert\bm e_x\rVert^2$.
\label{property-sec}
\end{enumerate}  
\end{property} 
\begin{IEEEproof}
Let $R_{sk}(R_r)>0$, i.e., the numerator term inside $\log$ function in~\eqref{eq:keyfunc}  must be larger than the denominator. This implies that 
\begin{equation*}
4^{-R_r/\beta}(\lVert\bm b_x\rVert^2-\lVert\bm e_x\rVert^2)<\lVert\bm b_x\rVert^2-\lVert\bm e_x\rVert^2.
\end{equation*}
The exponential term is always smaller or equal to one due to the non-negativity of reconciliation rate, therefore the inequality can only be fulfilled if $\lVert\bm b_x\rVert^2>\lVert\bm e_x\rVert^2$ and $R_r>0$. Conversely, assuming the conditions (i) and (ii) are satisfied, then one can strictly lower bound \eqref{eq:keyfunc} by removing the exponential term, which in turn yields $R_{sk}(R_r)>0$.  
\end{IEEEproof}
\vspace{.5em}
\noindent Next, we establish the appropriate representations for the key-reconciliation function for both HD and FD modes. 
\subsubsection{HD mode}
\label{HD-Key-compuation}
For HD mode~\eqref{HD-model}, we denote the key-reconciliation function as $\rskhd$. The computation of $\rskhd$ follows along similar lines as in~\eqref{eq:keyfunc} with the specific choice of parameters 
 $\exhd$, and $\bxhd$. This leads to
\begin{align}
\label{eq:keyfuncHDup}
\rskhd\left(\rrhd\right) \overset{(a)}{\leq} \rskhd\left(\rrhdup\right) =: \rskhdup,
\end{align}
where $(a)$ follows due to~\eqref{eq:rpHdBound}. 
%As a consequence, by the monotonicity of $\rskhd\left(\rrhd\right)$, the upper bound~\eqref{eq:keyfuncHDup} follows. 
Details on the computation of the parameters $\exhd$, and $\bxhd$, are relegated to Appendix~\ref{sec:appGenHD}.
%
%Moreover, . This leads to an upper bound on the key-reconciliation function~\eqref{eq:keyfunc}, defined by $\rskhdup$ in the following.

\subsubsection{FD mode}
\label{FD-Key-compuation}
In FD mode~\eqref{FD-model}, the key-reconciliation function is denoted by $\rskfd$.
The computation of $\rskfd$ follows along similar lines as in~\eqref{eq:keyfunc} with the specific choice of parameters $\exfd$ and $\bxfd$.
Subsequently, we get
\begin{align}
\label{eq:keyfuncFDlow}
\rskfd\left(\rrfd\right) \overset{(b)}{\geq} \rskfd\left(\rrfdlow\right) =: \rskfdlow,
\end{align}
where $(b)$ follows from~\eqref{eq:rpFdBound}. 
%and exploit the monotonicity of~$\rskfd\left(\rrfd\right)$ to obtain the lower bound of~\eqref{eq:keyfuncFDlow}.
The computation of the parameters $\exfd$, and $\bxfd$, is provided in Appendix~\ref{sec:appGenFD}. 
\vspace{.5em}
\begin{remark}
The particular structure of bounds in~\eqref{eq:keyfuncHDup} and~\eqref{eq:keyfuncFDlow} provides the worst-case performance gains between the FD and HD mode, i.e.,  $\rskfdlow-\rskhdup$. 
\end{remark}

\section{Main Results and Discussion}
\label{sec:results}
In this section, we present our main results by comparing the performance of HD and FD approaches.  We first define a metric which we refer to as improvement ratio, that shows the  performance gain of FD over HD approaches.
Typically, the upgrade from HD to FD capability requires some extra hardware at transceivers~\cite{src:heino2015recent}. Therefore, the improvement of FD mode over HD has to be quite significant in order to justify the additional expenses.

\begin{definition}
The FD over HD  improvement ratio $\eta$ and its lower bound $\underline{\eta}$ are given by 
\begin{align}
\label{eq:improvratio}
\eta = \frac{\rskfd-\rskhd}{\rskhd} \geq \underline{\eta} := \frac{\rskfdlow-\rskhdup}{\rskhdup},
\end{align}
where $\rskfd\neq 0$ and $\rskfdlow\geq \rskhdup$.
\end{definition}
\noindent Figure~\ref{fig:CompVsSnr} shows examples of secret-key rate and improvement ratio. We fix parameters to $\rho_{ae}=\rho_{be}=\rho_{e}=0.4$, $\rho_{ba}^2=1$, $\delta=0.97$ and $\beta=0.5$. In this example, we set $\snrae=\snrbe=\snr$ for the following reason --- If the legitimate nodes increase their transmit powers when they exchange pilots, then Eve benefits likewise in terms of $\snr$. Therefore, if Eve has finite $\snrae$ and $\snrbe$, it is reasonable to assume that both $\snrae$, and $\snrbe$ are proportional to $\snr$.  We show the FD rates~$\eqref{eq:keyfuncFDlow}$ for different values of $\alpha\in \{ -10, -17, -20\}$~dB. Fig.~\ref{fig:etaCompVsSnr} illustrates improvement ratio~\eqref{eq:improvratio} versus $\snr$ for  $\alpha\in \{ -17, -20\}$~dB. The Figure~\ref{fig:skCompVsSnr} shows the significance of self-interference cancellation, since we only achieve some improvement of FD over HD mode, if $\alpha\ll 1$. For $\alpha=-10$~dB, the HD mode performs better than the FD mode. However, perfect cancellation is \emph{not} necessary in order to justify the FD setup, and therefore secret-key generation in FD mode is attractive even if the methods for interference cancellation have low complexity. %\textbf{ Note that at low $\snr$, in this example, HD mode performs slightly better than FD mode.  }
%\iffalse
%Figure~\ref{fig:CompVsSnr} depicts an example of secret-key rates and improvement ratio. It shows the significance of self-interference cancellation, since we only achieve some improvement of FD over HD mode if $\alpha\ll 1$. However, perfect cancellation is \emph{not} necessary in order to justify the FD setup, and therefore secret-key generation is attractive even if the methods for interference cancellation have low complexity.\fi
\ifCLASSOPTIONdraftcls
\else
\newcounter{MYtempeqncnt}
\begin{figure*}
\normalsize
\setcounter{equation}{38}
\begin{equation}
\label{eq:betaSecDer}
\frac{\partial^2R_{sk}}{\partial \beta^2}=
-\frac{
2\ln \tilde{r}_p4^{\tilde{r}_p\left(1+\frac{1}{\beta}\right)} (\lVert\bm b_x\rVert^2-\lVert\bm e_x\rVert^2)(1+\lVert\bm b_x\rVert^2)
}{
\beta^3\left[4^{\tilde{r}_p}(\lVert\bm b_x\rVert^2-\lVert\bm e_x\rVert^2)-4^{\frac{\tilde{r}_p}{\beta}}
(1+\lVert\bm b_x\rVert^2) 
\right]^2
}
<0
\end{equation}
\hrulefill
\vspace*{4pt}
\end{figure*} 
\begin{figure*}
\normalsize
\setcounter{equation}{39}
\begin{equation}
\label{eq:eSecDer}
\frac{\partial^2R_{sk}}{\partial (\lVert\bm e_x\rVert^2)^2}=
\frac{
(1+\lVert\bm b_x\rVert^2)\left(1-4^{-\frac{R_r}{\beta}}\right)
\left[1+4^{-\frac{R_r}{\beta}}(1+2\lVert\bm e_x\rVert^2)
+\lVert\bm b_x\rVert^2\left(1-4^{-\frac{R_r}{\beta}}\right)\right]
}{
(1+\lVert\bm e_x\rVert^2)^2
\left[ 1+4^{-\frac{R_r}{\beta}}\lVert\bm e_x\rVert^2
+\lVert\bm b_x\rVert^2\left(1-4^{-\frac{R_r}{\beta}}\right)
\right]^2
}>0
\end{equation}
\hrulefill
\vspace*{4pt}
\end{figure*} 
\fi

\ifCLASSOPTIONdraftcls
%\iffalse
\begin{figure*}
\centering
  \subfloat[]{ \label{fig:skCompVsSnr}

\begin{tikzpicture}
\begin{axis}[%
width=9.5cm,
height=4cm,
scale only axis,
xmin=-5,
xmax=35,
xlabel={$\snr$ [dB]},
ymin=0,
ymax=1.4,
ylabel={$R_{sk}$ [bits/observation]},
legend columns=2,
legend style={draw=black,fill=white,legend cell align=left, at={(1,1)},	anchor=south east}
]
\addplot [color=blue,solid,thick]
%,mark=o,mark options={solid}, mark repeat=10, mark phase=5]
  table{skCompVsSnr-1.tsv};
\addlegendentry{$\rskhdup~\eqref{eq:keyfuncHDup}$};

\addplot [color=red,solid,thick]
  table{skCompVsSnr-2.tsv};
\addlegendentry{$\rskfdlow~\eqref{eq:keyfuncFDlow}, \alpha=-10$ dB};
\addplot [color=red,solid,thick,dashed]
  table{skCompVsSnr-3.tsv};
\addlegendentry{$\rskfdlow~\eqref{eq:keyfuncFDlow}, \alpha=-17$ dB};
\addplot [color=red,solid,thick,dotted]
  table{skCompVsSnr-4.tsv};
\addlegendentry{$\rskfdlow~\eqref{eq:keyfuncFDlow}, \alpha=-20$ dB};

%\addplot [color=blue,solid,mark=triangle,mark options={solid}, mark repeat=10]
%  table{../figures/skCompVsSnr-3.tsv};
%\addlegendentry{$\rskfd$};
%
%\addplot [color=blue,solid,mark=diamond,mark options={solid}, mark repeat=10]
%  table{../figures/skCompVsSnr-4.tsv};
%\addlegendentry{$\rskhd$};
%\coordinate (spypoint) at (axis cs:0,0);
%\coordinate (spyviewer) at (axis cs:5,1.1);
%\spy[width=3.25cm,height=1.5cm] on (spypoint) in node [fill=white] at (spyviewer);
\end{axis}
\end{tikzpicture}%
 }
\hspace{1cm}
~\subfloat[]{\label{fig:etaCompVsSnr} 
\begin{tikzpicture}

\begin{axis}[%
width=9.63cm,
height=4cm,
scale only axis,
xmin=5,
xmax=30,
xlabel={$\snr$ [dB]},
ymin=0,
ymax=70,
ylabel={Improvement ratio $\underline{\eta}$ [\%]},
legend style={draw=black,fill=white,legend cell align=left}
]
%\addplot [color=blue,solid,mark=o,mark options={solid}, mark repeat=10, mark phase=5]
%  table[row sep=crcr]{../figures/etaCompVsSnr-1.tsv};
%\addlegendentry{$\eta$};
%\addplot [color=blue,solid,thick]
%  table[row sep=crcr]{../figures/etaCompVsSnr-1.tsv};
%\addlegendentry{$\alpha=-10$ dB};
\addplot [color=blue,solid,thick,dashed]
  table[row sep=crcr]{etaCompVsSnr-2.tsv};
\addlegendentry{$\alpha=-17$ dB};
\addplot [color=blue,solid,thick,dotted]
  table[row sep=crcr]{etaCompVsSnr-3.tsv};
\addlegendentry{$\alpha=-20$ dB};
\end{axis}
\end{tikzpicture}%
}
\caption{Secret-key rates (a) from bounded equivalents of HD~\eqref{eq:keyfuncHDup}, and FD modes~\eqref{eq:keyfuncFDlow}; and,  (b) improvement ratios~\eqref{eq:improvratio} for different values of $\snr$ at the legitimate users and $\snrae=\snrbe=\snr$ for the eavesdropper. The other parameters are set to $\rho_{ae}=\rho_{be}=\rho_{e}=0.4$, $\rho_{ba}^2=1$, $\delta=0.97$ and $\beta=0.5$.}
\label{fig:CompVsSnr}
\end{figure*}
\else
\begin{figure*}
\centering
  \subfloat[]{ \label{fig:skCompVsSnr}

\begin{tikzpicture}[spy using outlines={rectangle, magnification=2,connect spies}]
\begin{axis}[%
width=6cm,
height=4cm,
scale only axis,
xmin=-5,
xmax=35,
xlabel={$\snr$ [dB]},
ymin=0,
ymax=1.4,
ylabel={$R_{sk}$ [bits/observation]},
legend columns=2,
legend style={draw=black,fill=white,legend cell align=left, at={(1,1)},	anchor=south east}
]
\addplot [color=blue,solid,thick]
%,mark=o,mark options={solid}, mark repeat=10, mark phase=5]
  table{../figures/skCompVsSnr-1.tsv};
\addlegendentry{$\rskhdup~\eqref{eq:keyfuncHDup}$};

\addplot [color=red,solid,thick]
  table{../figures/skCompVsSnr-2.tsv};
\addlegendentry{$\rskfdlow~\eqref{eq:keyfuncFDlow}, \alpha=-10$ dB};
\addplot [color=red,solid,thick,dashed]
  table{../figures/skCompVsSnr-3.tsv};
\addlegendentry{$\rskfdlow~\eqref{eq:keyfuncFDlow}, \alpha=-17$ dB};
\addplot [color=red,solid,thick,dotted]
  table{../figures/skCompVsSnr-4.tsv};
\addlegendentry{$\rskfdlow~\eqref{eq:keyfuncFDlow}, \alpha=-20$ dB};

%\addplot [color=blue,solid,mark=triangle,mark options={solid}, mark repeat=10]
%  table{../figures/skCompVsSnr-3.tsv};
%\addlegendentry{$\rskfd$};
%
%\addplot [color=blue,solid,mark=diamond,mark options={solid}, mark repeat=10]
%  table{../figures/skCompVsSnr-4.tsv};
%\addlegendentry{$\rskhd$};
%\coordinate (spypoint) at (axis cs:0,0);
%\coordinate (spyviewer) at (axis cs:5,1.1);
%\spy[width=3.25cm,height=1.5cm] on (spypoint) in node [fill=white] at (spyviewer);
\end{axis}
\end{tikzpicture}%
 }
\hspace{1cm}
~\subfloat[]{\label{fig:etaCompVsSnr} 
\begin{tikzpicture}

\begin{axis}[%
width=6cm,
height=4cm,
scale only axis,
xmin=5,
xmax=30,
xlabel={$\snr$ [dB]},
ymin=0,
ymax=70,
ylabel={Improvement ratio $\underline{\eta}$ [\%]},
legend style={draw=black,fill=white,legend cell align=left}
]
%\addplot [color=blue,solid,mark=o,mark options={solid}, mark repeat=10, mark phase=5]
%  table[row sep=crcr]{../figures/etaCompVsSnr-1.tsv};
%\addlegendentry{$\eta$};
%\addplot [color=blue,solid,thick]
%  table[row sep=crcr]{../figures/etaCompVsSnr-1.tsv};
%\addlegendentry{$\alpha=-10$ dB};
\addplot [color=blue,solid,thick,dashed]
  table[row sep=crcr]{../figures/etaCompVsSnr-2.tsv};
\addlegendentry{$\alpha=-17$ dB};
\addplot [color=blue,solid,thick,dotted]
  table[row sep=crcr]{../figures/etaCompVsSnr-3.tsv};
\addlegendentry{$\alpha=-20$ dB};
\end{axis}
\end{tikzpicture}%
}
\caption{Secret-key rates (a) from bounded equivalents of HD~\eqref{eq:keyfuncHDup}, and FD modes~\eqref{eq:keyfuncFDlow}; and,  (b) improvement ratios~\eqref{eq:improvratio} for different values of $\snr$ at the legitimate users and $\snrae=\snrbe=\snr$ for the eavesdropper. Parameters are set to $\rho_{ae}=\rho_{be}=\rho_{e}=0.4$, $\rho_{ba}^2=1$, $\delta=0.97$ and $\beta=0.5$.}
\label{fig:CompVsSnr}
\end{figure*}
\fi
%\tc{red}{It can be easily seen from Figure~\ref{fig:skCompVsSnr} that the bounded approximations get closer to the exact values for higher SNR.}

%Since the general connections between the system models of Section~\ref{sec:systemmodel} and the rates of Section~\ref{sec:skrates} are now formulated, we can perform the analysis of several aspects in the following subsections. 
\begin{remark}
Such nodes that are equipped with FD capability can always use the HD mode as a fallback solution. Therefore, a node in FD mode technically never performs worse than a node with HD mode only and effectively achieves $\max\left( \rskhd,\rskfd \right)$.   
\end{remark}

%\iffalse
\subsection{Probing-reconciliation trade-off}
\label{sec:estComm}
Recall from~\figref{fig:seq} that the probing and  reconciliation phases are sharing $n$~coherence blocks, where the parameter $\beta$ captures the resource trade-off between the time duration of two phases. Next, we show that there exists an optimal $\beta$ for which the secret-key rate is maximized. From the system designer viewpoint, this value is especially relevant since --- except for the input power --- all other system parameters are difficult to change.  
\begin{property}
For any $0<\beta< 1$ and a reconciliation rate $R_{p}=(1-\beta)\tilde{r}_{p}$, where
\begin{enumerate}[label=(\roman*)]
\item $\tilde{r}_{p}$ is independent of $\beta$,
\item $\left\lVert\bm b_x \right\rVert^2>\left\lVert\bm e_x \right\rVert^2$,
\item $R_r>0$, 
\end{enumerate}
the key-reconciliation function $R_{sk}(R_r)$ is strictly concave with respect to $\beta$.
\end{property}
\begin{IEEEproof}
The key-reconciliation function of~\eqref{eq:keyfunc} has a extreme point within $0<\beta<1$ and its second derivative is given  by
\ifCLASSOPTIONdraftcls
\begin{equation*}
\label{eq:betaSecDer}
\frac{\partial^2R_{sk}}{\partial \beta^2}=
-\frac{
2\ln \tilde{r}_p4^{\tilde{r}_p\left(1+\frac{1}{\beta}\right)} (\lVert\bm b_x\rVert^2-\lVert\bm e_x\rVert^2)(1+\lVert\bm b_x\rVert^2)
}{
\beta^3\left[4^{\tilde{r}_p}(\lVert\bm b_x\rVert^2-\lVert\bm e_x\rVert^2)-4^{\frac{\tilde{r}_p}{\beta}}
(1+\lVert\bm b_x\rVert^2) 
\right]^2
}
<0.
\end{equation*}
\else 
~\eqref{eq:betaSecDer} on the next page.
\fi
The second derivative is negative for all $\beta$, which proves the concavity of~\eqref{eq:keyfunc} with respect to $\beta$.
\end{IEEEproof}
\vspace{.5em}
\noindent Figure \ref{fig:funcbeta} illustrates an example of the optimal  $\beta^{*}$ as a function of $\snr$. The optimal $\beta^{*}$ for HD and FD mode are obtained by maximizing $\rskhdup~\eqref{eq:keyfuncHDup}$ and $\rskfdlow~\eqref{eq:keyfuncFDlow}$ with respect to $\beta$.  It can be clearly seen from Figure~\ref{fig:funcbeta} that it is more beneficial to proportionally increase the number of channel estimates at higher $\snr$. In this regime, the channel supports enhanced reconciliation rates at higher $\snr$, and therefore the constraint of~\eqref{eq:commRate} is more relaxed. This trend can be observed for both HD and FD modes. The optimal $\beta^*$ is linearly increasing in HD mode at higher $\snr$. This implies that, there is no need of further investing into reconciliation, since the reconciliation rate gets unbounded with respect to $\snr$. In the case of FD mode, however, the reconciliation rate is impeded by self-interference and therefore limited for high $\snr$ and accordingly, $\beta^*$ approaches a saturation point.
%As figures~\ref{fig:OptRateBetaVsSnr} and \ref{fig:EtaBetaVsSnr} show, using the optimal $\beta$ might actually diminish the advantage of FD over HD mode for high SNR. This can be explained by the impact of self-interference, which gets more significant at high SNR. %\todo{say a bit more}
 %In addition, the value of SNR denoting the intersection of HD and FD curves is not the same for Figures \ref{fig:funcbeta} and \ref{fig:OptRateBetaVsSnr}. 

% As a natural conclusion, a system designer chooses FD mode at moderate SNR, but switches to HD mode in SNR scenarios. Utilizing the FD mode is only profitable if the improvement over HD is notable due to the increased hardware complexity, as already discussed.
\begin{remark}
 The proper choice of parameters like $\rho_{ba}$ or $\delta$ might not be  trivial, since they largely depend on the environment where the system is deployed; and, also on the properties of the node's transceivers. However, as many HD implementations (for instance~\cite{src:liu2013fast}) have shown, the channel reciprocity represented by $\delta\rho_{ba}$ is usually very high. Likewise, in all of our examples, we assume that $\delta$ is close to one. 
\end{remark}
% \todo{cite the equation numbers from the text, magnify the region between 0 and 2 dB}
\begin{figure*}
\centering
\begin{tikzpicture}
\begin{axis}[%
width=10cm,
height=3cm,
scale only axis,
xmin=0,
xmax=40,
xlabel={$\snr$ [dB]},
ymin=0.35,
ymax=0.9,
ylabel={$\beta^{*}$},
legend style={draw=black,fill=white,legend cell align=left, at={(0.8,0.1)}, anchor=south}
]
\addplot [thick,color=blue,thick]
%,solid,mark=o, mark repeat=15]
  table{BetastarVsSnr-2.tsv};
\addlegendentry{HD}; %$\rskhdup~\eqref{eq:keyfuncHDup}$

\addplot [thick,color=red,thick]
%solid,mark=triangle, mark repeat=15]
  table{BetastarVsSnr-1.tsv}; 
\addlegendentry{FD};%$\rskfdlow~\eqref{eq:keyfuncFDlow}$
\end{axis}
\end{tikzpicture}%
\caption{Depiction of optimal allocation $\beta^*$ of channel estimation and reconciliation resources for different values of the $\snr$ at the legitimate users where $\snrae=\snrbe=\snr$ for the eavesdropper. The other parameters are chosen as $\rho_{ba}=1$, $\delta=0.95$, $\rho_{ae}=\rho_{be}=\rho_e=0.4$ and $\alpha=-15~\text{dB}$.}
\label{fig:funcbeta}
\end{figure*}
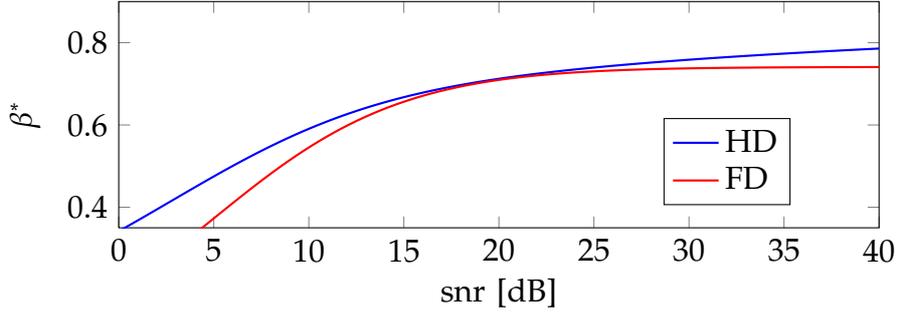

%\begin{figure*}
%\centering
%  \subfloat[]{ \label{fig:OptRateBetaVsSnr}
%	\begin{tikzpicture}
%	\begin{axis}[%
%	width=7cm,
%	height=3cm,
%	scale only axis,
%	xmin=0,
%	xmax=40,
%	xlabel={$\snr$ [dB]},
%	ymin=0,
%	ymax=1.6,
%	ylabel={$R_{sk}$ [bits/observation]},
%	legend style={draw=black,fill=white,legend cell align=left, at={(0.65,0.1)}, anchor=south}
%	]
%	\addplot [thick,color=blue,solid,mark=o, mark repeat=15]
%	  table{../figures/OptRateBetaVsSnr-1.tsv};
%	\addlegendentry{$\rskfd$};
%	
%	\addplot [thick,color=blue,solid,mark=triangle, mark repeat=15]
%	  table{../figures/OptRateBetaVsSnr-2.tsv};
%	\addlegendentry{$\rskhd$};
%	\end{axis}
%	\end{tikzpicture}%  
%  }
%~
%  \subfloat[]{\label{fig:EtaBetaVsSnr} 
%	\begin{tikzpicture}
%	
%	\begin{axis}[%
%	width=7cm,
%	height=3cm,
%	scale only axis,
%	xmin=0,
%	xmax=40,
%	xlabel={$\snr$ [dB]},
%	ymin=30,
%	ymax=80,
%	ylabel={$\eta$ [\%]}
%	]	
%	\draw[color=gray] (axis cs: -10,0) to (axis cs: 50,0); 
%	%[color=gray] 
%	\addplot [thick,color=blue,solid,forget plot]
%	  table{../figures/etaVsSnr-1.tsv};
%	\end{axis}
%	\end{tikzpicture}%  
%  }
%\caption{Secret-key rates (a) and improvement ratio (b) for optimal trade-off allocation $\beta^{*}$.  Parameters are chosen to $\rho_{ba}=1$, $\delta=0.95$, $\rho_{ae}=\rho_{be}=\rho_e=0.4$ and $\alpha=-15~\text{dB}$.}
%\end{figure*}

\subsection{Strong eavesdropper}
\label{sec:strongEve}
The strong secrecy condition~\eqref{eq:uniSecrecy} guarantees security against an Eve with unlimited computational power. However, in a practical system, Eve has more options. For instance, Eve can obtain equipment that provides almost noise-free reception of signals. Furthermore, even a passive Eve is able to change the position in the environment. Thus, Eve can evoke certain channel statistics that are more favorable for eavesdropping.  In this sub-section, we consider a special setting that models the worst-case scenario, where we assume that the receiver noise is negligible at Eve, i.e., $n_{ae}=n_{be}=n_{e}:=0$; and, provide regions under which a positive secret-key rate is achievable.  For convenience, we define the parameter $\xi:=\sqrt{ {\snrbe}/{\snrae}}$, that denotes the ratio of channel strengths between Bob-to-Eve and Alice-to-Eve links.

We first digress to provide a property which is useful to establish the results in this sub-section.
\iffalse
Subsequently, we put our focus on certain parameters like~$\rho_{ae}$, $\rho_{be}$ and $\rho_{e}$, which qualify Eve's capability of eavesdropping. The following property is useful:\fi
\begin{property}
\label{e-decresing}
The key-reconciliation function $R_{sk}(R_r)$~\eqref{eq:keyfunc} is strictly monotonically decreasing with respect to $\lVert\bm e_x\rVert^2$.
\end{property}
\begin{IEEEproof}
The first derivative of $R_{sk}(R_r)$~\eqref{eq:keyfunc} with respect to $\lVert\bm e_x\rVert^2$ does not yield extreme points as necessary condition of local extrema. After some straightforward algebra, by taking the second derivative 
\ifCLASSOPTIONdraftcls
\begin{equation}
 \label{eq:eSecDer}
\frac{\partial^2R_{sk}}{\partial (\lVert\bm e_x\rVert^2)^2}=
\frac{
(1+\lVert\bm b_x\rVert^2)\left(1-4^{-\frac{R_r}{\beta}}\right)
\left[1+4^{-\frac{R_r}{\beta}}(1+2\lVert\bm e_x\rVert^2)
+\lVert\bm b_x\rVert^2\left(1-4^{-\frac{R_r}{\beta}}\right)\right]
}{
(1+\lVert\bm e_x\rVert^2)^2
\left[ 1+4^{-\frac{R_r}{\beta}}\lVert\bm e_x\rVert^2
+\lVert\bm b_x\rVert^2\left(1-4^{-\frac{R_r}{\beta}}\right)
\right]^2
}>0,
\end{equation}
\else
as provided in~\eqref{eq:eSecDer}, 
\fi
it can be readily seen that~\eqref{eq:eSecDer} is positive for all $\lVert\bm e_x\rVert^2$, which yields the desired result.
\end{IEEEproof}
\vspace{.5em}
Next, we provide the regions under which the legitimate nodes experience the \emph{best} or the \emph{worst} situations. 

\subsubsection{HD mode}
In this section, we first compute $\exhd$ and later on provide the region under which positive secrecy rate is achievable. The computation of $\exhd$ follows along similar lines as shown in Appendix~\ref{sec:appGenHD} and is given by 
\ifCLASSOPTIONdraftcls
 \begin{align}
\label{eq:exStrongEveHd} 
 \exhd=\frac{\snr(\rho_{ae}^2+\rho_{be}^2-2\rho_{e}\rho_{ae}\rho_{be})}{(1+\snr)(1-\rho_e^2)-\snr(\rho_{ae}^2+\rho_{be}^2-2\rho_{e}\rho_{ae}\rho_{be})}.
 \end{align}
\else
 \begin{align}
 &\exhd =  \nonumber\\
 \label{eq:exStrongEveHd}
 &\quad\frac{\snr(\rho_{ae}^2+\rho_{be}^2-2\rho_{e}\rho_{ae}\rho_{be})}{(1+\snr)(1-\rho_e^2)-\snr(\rho_{ae}^2+\rho_{be}^2-2\rho_{e}\rho_{ae}\rho_{be})}.
 \end{align}
\fi
According to~\eqref{eq:azaz}, the covariance matrix of Eve given the observations at Alice is  
%\todo{cite equation numbers, and by setting this this paramter :=0}
\begin{align}
% \bm\sigma_{\bm zx}:=\sqrt{\snr}
%  \begin{pmatrix}
%   \rho_{ae}   \\
%   \rho_{be} \\
%  \end{pmatrix},
% \nonumber\\
 \label{eq:eveStrongVarsHD}
 \bm\Sigma_{\bm z|x} :=
  \begin{pmatrix}
   \Sigma_{11} & \Sigma_{12}  \\
   \Sigma_{21} &  \Sigma_{22} \\
  \end{pmatrix}
 \end{align}
 with  entries
 \begin{align*}
 \Sigma_{11}&=\snrae\left(1-\frac{\snr}{1+\snr}\rho^2_{ae}\right),\\
  \Sigma_{12}&=\Sigma_{21}=\sqrt{\snrae\snrbe}\left(\rho_{e}-\frac{\snr}{1+\snr}\rho_{ae}\rho_{be}\right),\\
   \Sigma_{22}&=\snrbe\left(1-\frac{\snr}{1+\snr}\rho^2_{be}\right),
 \end{align*}
 which needs to satisfy the  positive definite matrix condition.\footnote{A matrix is positive definite, if all leading principal minors are positive. Let $\bm A$ be a $n\times n$ matrix. A $k\times k$ sub matrix is constructed by deleting the last $n-k$ columns and rows from $\bm A$. The leading principal minor of order~$k$ is the determinant of that submatrix.} 

\begin{definition}
For a fixed $\snr$, the region $\mathcal{A}^{\text{HD}}_z$ denotes all 4-tuples $\left(\rho_{ae}, \rho_{be}, \rho_e,\xi\right)\in[-1,1]\times[-1,1]\times[-1,1]\times(0,\infty)$ that satisfy\footnote{Note the parameter $\xi=\sqrt{ {\snrbe}/{\snrae}}$ is added in order to enable the intersection to the corresponding region in the FD mode.}
%\begin{subequations}
\label{eq:feasibleHD}
\begin{align}
%\mathcal{A}^{\text{HD}}_z=
%\left\lbrace
%\left(\rho_{ae}, \rho_{be}, \rho_e\right)^T\Big|
%\right\rbrace
\label{eq:feasibleHD2}
1-\rho_e^2-\frac{\snr}{1+\snr}\left(\rho_{ae}^2+\rho_{be}^2-2\rho_{ae}\rho_{be}\rho_e\right)&>0.
\end{align} 
%\end{subequations}
\end{definition}
\begin{proof}
The region $\mathcal{A}^{\text{HD}}_z$ in Definition~\ref{eq:feasibleHD} follows by computing the determinant of~\eqref{eq:eveStrongVarsHD}.
\end{proof}

\subsubsection{FD mode}
In this section, we first compute the $\exfd$ and later on provide the region under which positive secret-key rate is achievable. The computation of $\exfd$  is given by 
\ifCLASSOPTIONdraftcls
 \begin{align}
 \label{eq:exStrongEveFd}
 \exfd =\frac{\snr(\rho_{ae}+\xi\rho_{be})^2}{(1+\xi^2+2\xi\rho_e)(1+\snr(1+\alpha^2))-\snr(\rho_{ae}+\xi\rho_{be})^2}.
 \end{align}
\else
 \begin{align}
 \label{eq:exStrongEveFd}
 &\exfd = \notag\\ &\qquad\frac{\snr(\rho_{ae}+\xi\rho_{be})^2}{(1+\xi^2+2\xi\rho_e)(1+\snr(1+\alpha^2))-\snr(\rho_{ae}+\xi\rho_{be})^2}.
 \end{align}
\fi
The covariance matrix~\eqref{eq:azaz} yields
\begin{align}
% \sigma_z^2 &=1+2(1+\rho_e), \nonumber\\
  \label{eq:eveStrongVarsFD}
  \bm\Sigma_{\bm z|x}  &=\snr\left( 1+\xi^2+2\xi\rho_e\right)-\frac{\snr(\rho_{ae}+\xi\rho_{be})^2}{1+\snr(1+\alpha^2)}.
\end{align}

%Furthermore, as we put our focus in his section rather on the capabilities of the eavesdropper, we also assume $\rho_{ba}^2=\delta=1$ and perfect self-interference cancellation by $\alpha=0$. This immediately implies that $b_x^{(HD)}=b_x^{(FD)}$. 
%In this section, we mainly focus on the parameters $\lVert\bm e_x^{(HD)}\rVert^2$ and $\lVert\bm e_x^{(FD)}\rVert^2$ of Eve. The derivation can be found in Appendix~\ref{sec:appStrongEve}.
%In order to keep the analysis simpler, we assume perfect self-interference cancellation with $\alpha=0$.
\begin{definition}
\label{eq:feasibleFD}
For fixed $\snr$ and $\alpha$, the region $\mathcal{A}^{\text{FD}}_z$ denotes all 4-tuples $\left(\rho_{ae}, \rho_{be}, \rho_e,\xi\right) \in[-1,1]\times[-1,1]\times[-1,1]\times(0,\infty)$ that satisfy
\begin{align}
1+\xi^2+2\xi\rho_e-\frac{\snr}{1+\snr(1+\alpha^2)}\left(\rho_{ae}+\xi\rho_{be}\right)^2>0.
\end{align} 
\end{definition}

\begin{proof}
The region defined in Definition~\ref{eq:feasibleFD} follows straightforwardly from~\eqref{eq:eveStrongVarsFD}.
\end{proof}
\vspace{.5em}

\noindent From the Definitions~\ref{eq:feasibleHD} and~\ref{eq:feasibleFD} --- the \emph{feasible} region   $\mathcal{A}_z$ --- is then given by  $\mathcal{A}_z=\mathcal{A}^{\text{HD}}_z\cap\mathcal{A}^{\text{FD}}_z$. Figure~\ref{fig:feasible} shows an example of the typical feasible region where the meshes denote the outer boundaries of the region.
\begin{figure}
\centering
\begin{tikzpicture}
\begin{axis}[%
width=5cm,
height=4cm,
view={20.5}{16},
scale only axis,
xmin=-1,
xmax=1,
xlabel={$\rho{}_{\text{be}}$},
xmajorgrids,
ymin=-1,
ymax=1,
ylabel={$\rho{}_{\text{ae}}$},
ymajorgrids,
zmin=-1,
zmax=1,
zlabel={$\rho{}_{\text{e}}$},
zmajorgrids,
axis x line*=bottom,
axis y line*=left,
axis z line*=left
]

\addplot3[%
mesh,
draw=black!40,
%shader=faceted interp,
%colormap/jet,
mesh/rows=20]
table[header=false] {feasibleSetEve-1.tsv};

\addplot3[%
mesh,
draw=black!80,
mesh/rows=20]
table[header=false] {feasibleSetEve-2.tsv};

\node at (axis cs:-0.2525,0.1515,-0.2) (lmax) {$\scriptstyle\left(-0.15,0.25,-0.9\right)$};
\draw[thick, ->] (lmax) -- (axis
cs:-0.2525,0.1515,-0.9);

%\addplot3[%
%mesh,
%draw=black!40,
%mesh/rows=20]
%table[header=false] {../figures/feasibleSetEve-3.tsv};
%
%\addplot3[%
%mesh,
%draw=black!80,
%mesh/rows=20]
%table[header=false] {../figures/feasibleSetEve-4.tsv};
\end{axis}
\end{tikzpicture}%
\caption{Example of feasible region $\mathcal{A}_z$ for $\xi=1$ and fixed parameters $\alpha=-20$~dB, $\snr=10$~dB. The arrow points to a specific 3-tuple of $\left(\rho_{ae},\rho_{be},\rho_e\right)$ that is referred to later in Figure~\ref{fig:rateDiffVsRhoaebePos}.}
\label{fig:feasible}
\end{figure}
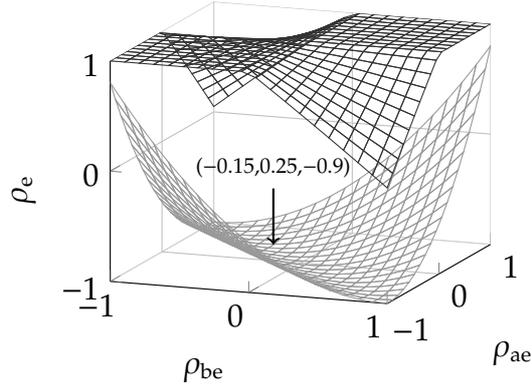
We now discuss two situations  where the legitimate nodes experience best or worst secret-key rates. The best case  occurs if Eve has  minimal knowledge of the channel state of Bob-to-Alice link. Recall from Property~\ref{e-decresing} that the secrecy rate $R_{sk}(R_r)$ in~\eqref{eq:keyfunc} is maximized,  by minimizing the $\lVert\bm e_x\rVert^2$.  The  best case  in terms of secret-key rate trivially happens with $\rho_{ae}=\rho_{be}=0$ in both modes. In this case, since there is no correlation between the Alice-to-Bob and  Alice-to-Eve links and vice versa; thus, Eve is unable to get any information about the secret key. In addition to this, it is interesting to note that there are some other useful situations in which positive secret-key rates are achievable. In HD mode, a maximum of secret-key rate is attained if $\rho_{ae}=\rho_e\rho_{be}$ holds with fixed $\rho_e$, and   $\rho_{be}$, or symmetrically, if $\rho_{be}=\rho_e\rho_{ae}$ holds with fixed $\rho_e$, and  $\rho_{ae}$. In FD mode, $\exfd$ vanishes if $\rho_{ae}=-\xi\rho_{be}$ holds. This follows due to the fact that Eve only obtains the superposition of observations from Alice-to-Eve and Bob-to Eve links, which can nullify the overall observation at Eve. 

In contrast to the above, the worst case  happens if Eve has occupied the most favorable condition for eavesdropping by maximizing $\lVert\bm e_x\rVert^2$ in~\eqref{eq:keyfunc}, which in turn reduces the secret key rate.  In HD mode, it can be readily shown that $\exhd$ is convex with respect to the tuple $(\rho_{ae}, \rho_{be})$, so the maximum  lies at the boundary of the feasible region defined in Definition~\ref{eq:feasibleHD}. In the case of FD mode, one can deduce a similar conclusion from~\eqref{eq:exStrongEveFd} and Definition~\ref{eq:feasibleFD}.

\begin{figure*}
\centering
\subfloat[$\rho_e=-0.9$]{ 
%\caption{}
\label{fig:rateDiffVsRhoaebeNeg}
% This file was created by matlab2tikz v0.4.7 running on MATLAB 8.3.
% Copyright (c) 2008--2014, Nico Schlömer <nico.schloemer@gmail.com>
% All rights reserved.
% Minimal pgfplots version: 1.3
% 
% The latest updates can be retrieved from
%   http://www.mathworks.com/matlabcentral/fileexchange/22022-matlab2tikz
% where you can also make suggestions and rate matlab2tikz.
% 
%
% defining custom colors
\definecolor{mycolor1}{rgb}{0.00000,0.00000,0.56250}%
\definecolor{mycolor2}{rgb}{0.00000,0.00000,0.81250}%
\definecolor{mycolor3}{rgb}{0.00000,0.43750,1.00000}%
\definecolor{mycolor4}{rgb}{0.06250,1.00000,0.93750}%
\definecolor{mycolor5}{rgb}{0.68750,1.00000,0.31250}%
\definecolor{mycolor6}{rgb}{1.00000,0.68750,0.00000}%
\definecolor{mycolor7}{rgb}{1.00000,0.06250,0.00000}%
\begin{tikzpicture}

\begin{axis}[%
width=8cm,
height=4cm,
scale only axis,
xmin=-1,
xmax=1,
xlabel={$\rho_{be}$},
ymin=-1,
ymax=1,
ylabel={$\rho_{ae}$},
colormap/jet,
colorbar,
colorbar style={
	title={$\rskfdlow-\rskhdup$}
},
point meta min=0.0246538887719323,
point meta max=0.35
]

\addplot[area legend,solid,fill=mycolor1,draw=black,forget plot]
table[row sep=crcr] {%
rateDiffVsRhoaebeNeg-1.tsv};

\addplot[area legend,solid,fill=mycolor2,draw=black,forget plot]
table[row sep=crcr] {%
rateDiffVsRhoaebeNeg-2.tsv};

\addplot[area legend,solid,fill=mycolor2,draw=black,forget plot]
table[row sep=crcr] {%
rateDiffVsRhoaebeNeg-3.tsv};

\addplot[area legend,solid,fill=mycolor3,draw=black,forget plot]
table[row sep=crcr] {%
rateDiffVsRhoaebeNeg-4.tsv};

\addplot[area legend,solid,fill=mycolor3,draw=black,forget plot]
table[row sep=crcr] {%
rateDiffVsRhoaebeNeg-5.tsv};

\addplot[area legend,solid,fill=mycolor4,draw=black,forget plot]
table[row sep=crcr] {%
rateDiffVsRhoaebeNeg-6.tsv};

\addplot[area legend,solid,fill=mycolor4,draw=black,forget plot]
table[row sep=crcr] {%
rateDiffVsRhoaebeNeg-7.tsv};

\addplot[area legend,solid,fill=mycolor5,draw=black,forget plot]
table[row sep=crcr] {%
rateDiffVsRhoaebeNeg-8.tsv};

\addplot[area legend,solid,fill=mycolor5,draw=black,forget plot]
table[row sep=crcr] {%
rateDiffVsRhoaebeNeg-9.tsv};

\addplot[area legend,solid,fill=mycolor6,draw=black,forget plot]
table[row sep=crcr] {%
rateDiffVsRhoaebeNeg-10.tsv};

\addplot[area legend,solid,fill=mycolor6,draw=black,forget plot]
table[row sep=crcr] {%
rateDiffVsRhoaebeNeg-11.tsv};

\addplot[area legend,solid,fill=mycolor7,draw=black,forget plot]
table[row sep=crcr] {%
rateDiffVsRhoaebeNeg-12.tsv};

\addplot[area legend,solid,fill=mycolor7,draw=black,forget plot]
table[row sep=crcr] {%
rateDiffVsRhoaebeNeg-13.tsv};

\addplot[area legend,solid,fill=black!50!red,draw=black,forget plot]
table[row sep=crcr] {%
rateDiffVsRhoaebeNeg-14.tsv};

\addplot[area legend,solid,fill=black!50!red,draw=black,forget plot]
table[row sep=crcr] {%
rateDiffVsRhoaebeNeg-15.tsv};

\addplot[area legend,solid,fill=mycolor2,draw=black,forget plot]
table[row sep=crcr] {%
rateDiffVsRhoaebeNeg-16.tsv};

\addplot[area legend,solid,fill=mycolor2,draw=black,forget plot]
table[row sep=crcr] {%
rateDiffVsRhoaebeNeg-17.tsv};

\addplot[area legend,solid,fill=mycolor2,draw=black,forget plot]
table[row sep=crcr] {%
rateDiffVsRhoaebeNeg-18.tsv};

\addplot[area legend,solid,fill=mycolor2,draw=black,forget plot]
table[row sep=crcr] {%
rateDiffVsRhoaebeNeg-19.tsv};

\addplot[area legend,solid,fill=mycolor3,draw=black,forget plot]
table[row sep=crcr] {%
rateDiffVsRhoaebeNeg-20.tsv};

\addplot[area legend,solid,fill=mycolor3,draw=black,forget plot]
table[row sep=crcr] {%
rateDiffVsRhoaebeNeg-21.tsv};

\addplot[area legend,solid,fill=mycolor3,draw=black,forget plot]
table[row sep=crcr] {%
rateDiffVsRhoaebeNeg-22.tsv};

\addplot[area legend,solid,fill=mycolor3,draw=black,forget plot]
table[row sep=crcr] {%
rateDiffVsRhoaebeNeg-23.tsv};

\addplot[area legend,solid,fill=mycolor6,draw=black,forget plot]
table[row sep=crcr] {%
rateDiffVsRhoaebeNeg-24.tsv};

\addplot[area legend,solid,fill=mycolor6,draw=black,forget plot]
table[row sep=crcr] {%
rateDiffVsRhoaebeNeg-25.tsv};

\addplot[area legend,solid,fill=mycolor6,draw=black,forget plot]
table[row sep=crcr] {%
rateDiffVsRhoaebeNeg-26.tsv};

\addplot[area legend,solid,fill=mycolor6,draw=black,forget plot]
table[row sep=crcr] {%
rateDiffVsRhoaebeNeg-27.tsv};

\end{axis}
\end{tikzpicture}%
}
\\
\subfloat[$\rho_e=0.9$]{
\label{fig:rateDiffVsRhoaebePos} 
\input{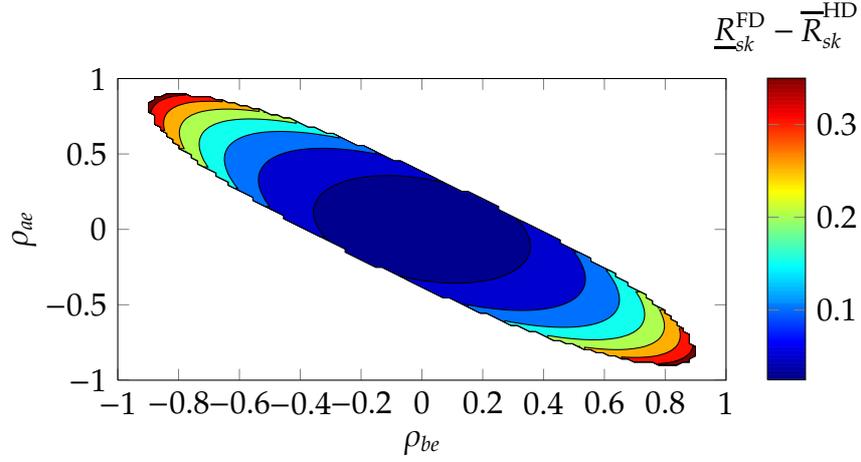}
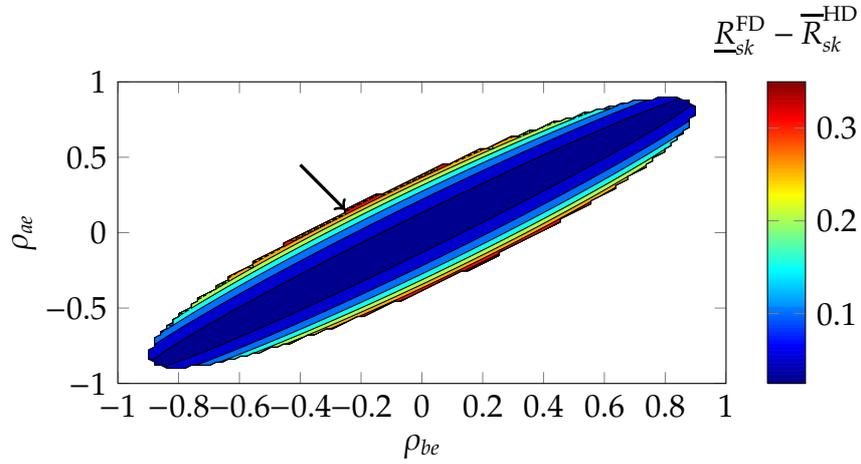
}
\caption{Improvement of FD over HD mode for parameters $\xi=1$, $\alpha=-15$~dB, $\snr=10$~dB, $\rho_{ba}^2=\delta=1$ and $\beta=0.5$, measured in bits/observation. The arrow in (b) denotes the point where the maximum value is achieved. The corresponding 3-tuple of  $\left(\rho_{ae},\rho_{be},\rho_e\right)$ is marked by the arrow in  Figure~\ref{fig:feasible}.}
\label{fig:VsRhoae}
\end{figure*}
\figref{fig:VsRhoae} shows the impact of different choices of $\rho_e, \rho_{ae}$, and $\rho_{be}$ on the $\rskfdlow-\rskhdup$ term, i.e., the improvement of FD over HD mode. Recall that the parameter $\rho_e$ denotes the mutual correlation of channels Alice-to-Eve and Bob-to-Eve. It also comprises statistical correlation that is not related to the Alice-to-Bob link. As it can be seen from \figref{fig:rateDiffVsRhoaebeNeg}, the FD system offers the most advantage if the correlation coefficients $\rho_{ae}$ and $\rho_{be}$ have different signs. Due to the superposition of observations from Alice-to-Eve and Bob-to-Eve links, for $\rho_e<0$ they are more likely to cancel each other. If we have $\rho_e>0$ as depicted in \figref{fig:rateDiffVsRhoaebePos}, the observations are likely to add constructively, and therefore significant improvement of FD over HD mode is only apparent if the values of $\rho_{ae}$ and $\rho_{be}$ are close to zero. \figref{fig:rateDiffVsRhoaeXi} shows the impact of $\xi$ and $\rho_{ae}$ on the difference of secret-key rates. For numerical tractability, we show it with fixed $\rho_{be}$ and $\rho_e$. 
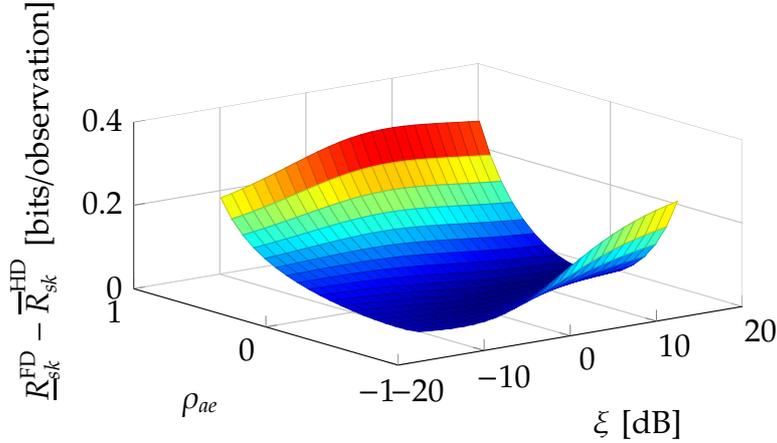
\begin{figure}
\centering
\begin{tikzpicture}
\begin{axis}[%
width=8cm,
height=4cm,
unbounded coords=jump,
view={-37.5}{30},
scale only axis,
xmin=-20,
xmax=20,
xlabel={$\xi$~[dB]},
xmajorgrids,
ymin=-1,
ymax=1,
ylabel={$\rho_{ae}$},
ymajorgrids,
zmin=0,
zmax=0.4,
zlabel={$\rskfdlow-\rskhdup$ [bits/observation]},
zmajorgrids,
axis x line*=bottom,
axis y line*=left,
axis z line*=left
]

\addplot3[%
surf,
shader=faceted,
draw=black,
colormap/jet,
mesh/rows=25]
table[row sep=crcr,header=false] {rateDiffVsRhoaeXi-1.tsv};
\end{axis}
\end{tikzpicture}
\caption{Improvement of FD over HD mode for parameters $\rho_e=0.2$, $\rho_{be}=-0.4$, $\alpha=-15$~dB, $\snr=10$~dB, $\rho_{ba}^2=\delta=1$ and $\beta=0.5$.}
\label{fig:rateDiffVsRhoaeXi}
\end{figure}
 The maximum gain is obtained when $\xi\approx 5$~dB. As mentioned before, $\rho_{ae}=-\xi\rho_{be}$ minimizes $\exfd$ in FD mode. Again, the FD mode is most beneficial if the correlation coefficients $\rho_{ae}$ and $\rho_{be}$ have opposite signs. 

\subsection{High SNR}
\label{sec:highSnr}
In this sub-section, we examine the case for arbitrarily high SNR at both the legitimate nodes and at the eavesdropper, where we ignore the contribution of noise at each node. For simplicity of analysis, we assume that
\begin{enumerate}
\item The eavesdropper experiences symmetric observations, i.e.,  $\rho_{ae}=\rho_{be}$ 
\item The users are able to perfectly cancel any self-interference, i.e., $\alpha=0$.
\end{enumerate}
 
\noindent We now provide a proposition that shows that under which conditions the FD mode performs better than HD mode.
\begin{proposition}
\label{sec:propsuffhighsnr}
%$R_{sk}^{(FD)}$ to be greater than $R_{sk}^{(HD)}$
The improvement ratio $\underline{\eta}>0$ of the FD over HD mode in the high SNR regime holds, if the following relations are satisfied   
\begin{align}
%\label{eq:suffCond1}
%\alpha &> 0, \\
%\label{eq:suffCond2}
%\rho_{e} &> 2\rho_{ae}^2-1, \\
%\label{eq:suffCond3}
%\rho_{ba}^2 &> 2\rho_{ae}^2\frac{1+\alpha^2}{1+\rho_e}>0, \\
%\label{eq:suffCond4}
%\delta^2 &< \frac{1-r_\beta\left(\alpha\right)}{\left(1+\alpha^2\right)^2-\rho_{ba}^2r_\beta\left(\alpha\right)}.
\label{eq:suffCond1}
\delta^2\rho_{ba}^2 &> \frac{2\rho_{ae}^2}{1+\rho_e-2\rho_{ae}^2}, \\
\label{eq:suffCond2}
\rho_{e} &< 1.
\end{align}
%with $\rho_{ae}^2\leq 1$.
\end{proposition}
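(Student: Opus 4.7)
The plan is to take the high-SNR limit $\snr \to \infty$ of both $\rskhdup$ and $\rskfdlow$, and verify that conditions (i) and (ii) jointly force $\rskhdup > 0$ and $\rskfdlow \geq \rskhdup$, from which $\underline{\eta} > 0$ follows. Throughout I would fix $\alpha = 0$, $\rho_{ae} = \rho_{be}$, and take $\snrae = \snrbe = \snr$ as in the running examples.

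First, I would exploit the fact that, by \eqref{eq:rpHdBound} and \eqref{eq:rpFdBound} (with $\alpha = 0$), both $\rrhdup$ and $\rrfdlow$ grow like $\Theta(\log\snr)$. Consequently, the correction term $4^{-R_r/\beta}$ inside the key-reconciliation function \eqref{eq:keyfunc} vanishes in the limit, and the secret-key rates collapse in both modes to the common asymptotic form
\begin{equation*}
R_{sk} \longrightarrow \frac{\beta}{2}\log_2\frac{1+\lVert\bm b_x\rVert^2}{1+\lVert\bm e_x\rVert^2},
\end{equation*}
with only the squared norms differing between HD and FD.

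Second, I would substitute \eqref{HD-model} and \eqref{FD-model} into \eqref{eq:parameterb} and \eqref{eq:parametere} and directly compute
\begin{equation*}
\bxhd \to \frac{\delta^2\rho_{ba}^2}{1-\delta^2\rho_{ba}^2},\quad \bxfd \to \frac{\rho_{ba}^2}{1-\rho_{ba}^2},\quad \exhd \to \exfd \to \frac{2\rho_{ae}^2}{1+\rho_e-2\rho_{ae}^2}.
\end{equation*}
The crucial structural observation is that $\exhd = \exfd$ in the limit: asymptotically, Eve's contribution is \emph{identical} in both modes, so the FD-versus-HD comparison reduces entirely to the ratio $(1+\bxfd)/(1+\bxhd)$. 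Condition (ii), $\rho_e < 1$, is precisely what keeps the feasibility regions of Definitions~\ref{eq:feasibleHD} and~\ref{eq:feasibleFD} non-degenerate, so that the above $\lVert\bm e_x\rVert^2$ limits are well-defined.

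Since $\rho_{ba}^2 \geq \delta^2\rho_{ba}^2$, we immediately obtain $\bxfd \geq \bxhd$ and hence $\rskfdlow \geq \rskhdup$. It then only remains to confirm $\rskhdup > 0$ so that $\underline{\eta}$ is meaningful and strictly positive: by Property~\ref{sec:propskpos} this is equivalent to $\bxhd > \exhd$ in the limit, i.e., $\frac{\delta^2\rho_{ba}^2}{1-\delta^2\rho_{ba}^2} > \frac{2\rho_{ae}^2}{1+\rho_e-2\rho_{ae}^2}$, and since $\frac{\delta^2\rho_{ba}^2}{1-\delta^2\rho_{ba}^2} \geq \delta^2\rho_{ba}^2$, condition (i) is a sufficient guarantee. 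The main obstacle will be the limit computation of $\lVert\bm e_x\rVert^2$: in HD one must invert the $2\times 2$ covariance $\bm\Sigma_{\bm z}$ and carefully track growth orders in $\snr$, whereas in FD one handles a scalar built from the coherent superposition $\sqrt{\snr}(h_{ae}+h_{be})$. Verifying that these two different calculations collapse to the same asymptotic value --- so that Eve's contribution cancels out of the difference $\rskfdlow - \rskhdup$ --- is the technical heart of the argument.
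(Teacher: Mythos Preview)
Your argument rests on the additional hypothesis $\snrae=\snrbe$ (i.e., $\xi=1$), which is \emph{not} among the assumptions of Section~\ref{sec:highSnr}; only $\rho_{ae}=\rho_{be}$ and $\alpha=0$ are imposed, while the paper's proof keeps $\xi$ free throughout. Your ``crucial structural observation'' that $\exhd=\exfd$ asymptotically is false for general $\xi$: the HD limit is $\frac{2\rho_{ae}^2}{1+\rho_e-2\rho_{ae}^2}$ (independent of $\xi$, since Eve has two separate observations), whereas the FD limit is $\frac{\rho_{ae}^2(1+\xi)^2}{1+\xi^2+2\xi\rho_e-\rho_{ae}^2(1+\xi)^2}$, and these coincide only when $\xi=1$. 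Once Eve's contribution no longer cancels, the comparison $\rskfdlow>\rskhdup$ cannot be reduced to $\bxfd\geq\bxhd$, and your route closes.

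This also means you have misidentified the role of condition~\eqref{eq:suffCond2}. In the paper it is not a mere feasibility requirement; it is the substantive inequality that makes the FD--HD comparison go through for arbitrary $\xi$. Concretely, the paper writes out $\frac{1+\bxfd}{1+\exfd}>\frac{1+\bxhd}{1+\exhd}$ with the full $\xi$-dependence, lower-bounds the factor $\frac{1-\delta^2\rho_{ba}^2}{1-\rho_{ba}^2}\geq 1$, and then the remaining inequality rearranges to $(1-\rho_e)(1-\xi)^2>0$, i.e.\ exactly $\rho_e<1$. The cases $\xi=1$ and $\rho_{ae}=0$ are handled separately. To repair your proof you must drop the assumption $\xi=1$, compute the two distinct Eve limits, and show that \eqref{eq:suffCond2} is precisely what forces the FD ratio to dominate the HD ratio.
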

\begin{IEEEproof}
We begin the proof as follows. In order to obtain a positive secret-key rate, we need to fulfill only Property~\ref{sec:propskpos}-\ref{property-sec}, i.e.,  $\bxhd>\exhd$, since  Property~\ref{sec:propskpos}-\ref{property-R} always holds because of insignificant contribution of noise. By appropriately computing the parameters $\bxhd$ and $\exhd$ along similar lines as shown in~\eqref{b-hd} and~\eqref{eq:exhdfull} from Appendix~\ref{sec:appGeneral}, we get
\begin{align}
\label{eq:cond1ineq}
\frac{\delta^2\rho_{ba}^2}{1-\delta^2\rho_{ba}^2}>
\frac{2\rho_{ae}^2}{1+\rho_e-2\rho_{ae}^2}.
\end{align}
The term in~\eqref{eq:suffCond1} then follows by lower bounding the LHS of~\eqref{eq:cond1ineq}.

We next bound the inequality \eqref{eq:suffCond2} as follows. Recall that the improvement requirement  $\underline{\eta}>0$ implies that $\rskfdlow>\rskhdup$. Starting  from~\eqref{eq:keyfuncHDup} and~\eqref{eq:keyfuncFDlow}, we get
\begin{align}
\label{eq:cond2ineq}
\frac{1+\bxhd}{1+\exhd}>\frac{1+\bxfd}{1+\exfd}.
\end{align}
 Subsequently, by plugging~\eqref{b-hd},~\eqref{eq:exhdfull},~\eqref{b-fd} and~\eqref{eq:exfdfull} from Appendix~\ref{sec:appGeneral} into~\eqref{eq:cond2ineq}, we obtain
 \ifCLASSOPTIONdraftcls
 \begin{align}
 \frac{1+\rho_e}{(1-\rho^2_{ba})(1+\rho_e-2\rho^2_{ae})}> 
 \label{eq:suffCondProof1}
\frac{(1+\xi^2+2\rho_e\xi)}{(1-\delta^2\rho^2_{ba})\left[ (1+\xi^2+2\rho_e\xi)-\rho^2_{ae}(1+\xi)^2 \right]}.
 \end{align}
 \else
\begin{align}
&\frac{1+\rho_e}{(1-\rho^2_{ba})(1+\rho_e-2\rho^2_{ae})}> \notag\\
\label{eq:suffCondProof1}
&\qquad\frac{(1+\xi^2+2\rho_e\xi)}{(1-\delta^2\rho^2_{ba})\left[ (1+\xi^2+2\rho_e\xi)-\rho^2_{ae}(1+\xi)^2 \right]}.
\end{align}
 \fi
Then~\eqref{eq:suffCondProof1} can be written as 
\ifCLASSOPTIONdraftcls
\begin{align}
\label{eq:suffCondProof2}
\frac{(1-\delta^2\rho^2_{ba})}{(1-\rho^2_{ba})}\frac{(1+\rho_e)}{(1+\rho_e-2\rho^2_{ae})}> 
\frac{(1+\xi^2+2\rho_e\xi)}{\left[ (1+\xi^2+2\rho_e\xi)-\rho^2_{ae}(1+\xi)^2 \right]}.
\end{align}
\else
\begin{align}
&\frac{(1-\delta^2\rho^2_{ba})}{(1-\rho^2_{ba})}\frac{(1+\rho_e)}{(1+\rho_e-2\rho^2_{ae})}> \notag\\
\label{eq:suffCondProof2}
&\qquad\frac{(1+\xi^2+2\rho_e\xi)}{\left[ (1+\xi^2+2\rho_e\xi)-\rho^2_{ae}(1+\xi)^2 \right]}.
\end{align}
\fi

\noindent Next, by lower bound the term on the LHS in~\eqref{eq:suffCondProof2} we get
\begin{align}
\label{eq:suffCondProof3}
\frac{1+\rho_e}{1+\rho_e-2\rho^2_{ae}}> \frac{(1+\xi^2+2\rho_e\xi)}{ (1+\xi^2+2\rho_e\xi)-\rho^2_{ae}(1+\xi)^2}.
\end{align}
Finally, by rearranging~\eqref{eq:suffCondProof3} with respect to $\rho_e$, we get~\eqref{eq:suffCond2}. Note that for the special cases $\xi=1$ and $\rho_{ae}=0$, $\rskfdlow>\rskhdup$ is always fulfilled. 
%For the inequality \eqref{eq:suffCond2}, we note that the relation $\exhd\geq\exfd$ is always satisfied for any $\alpha\geq 0$, thus the denominator of~\eqref{eq:keyfuncHD} is always larger or equal to the denominator of~\eqref{eq:keyfuncFD}.
%Next, we assure that the numerator of~\eqref{eq:keyfuncFD} is always larger than the numerator of~\eqref{eq:keyfuncHD}. Subsequently, we assume the condition
%\begin{align}
%\label{eq:proofsuffCond4}
%\bxfd-\bxhd>r_\beta\left(\alpha\right)\left(\bxfd-\exfd\right).
%\end{align}
%The RHS of inequality \eqref{eq:proofsuffCond4} can be upper bounded as follows:
%\begin{align}
%& r_\beta\left(\alpha\right)(1+\alpha^2)\frac{\rho_{ba}^2(1+\rho_e)-2\rho_{ae}^2(1+\alpha^2)}{(1+\alpha^2)(1+\rho_e)-2\rho_{ae}^2} \nonumber\\
%& \quad \overset{(a)}{\leq} r_\beta\left(\alpha\right)\frac{\rho_{ba}^2(1+\rho_e)-2\rho_{ae}^2(1+\alpha^2)}{(1+\rho_e)-2\rho_{ae}^2} \nonumber\\
%& \quad \overset{(b)}{\leq} r_\beta\left(\alpha\right)\rho_{ba}^2,
%\end{align}
%where in $(a)$ the denominator is decreased by multiplying $(1+\alpha^2)$ to the subtrahend.  Finally, in $(b)$, the numerator is increased by removing $(1+\alpha^2)$ from the subtrahend. By assuming $\rho_{ba}^2>0$, which is actually covered by \eqref{eq:suffCond3}, condition \eqref{eq:suffCond4} directly follows.
\end{IEEEproof}
\begin{remark}
The condition~\eqref{eq:suffCond1} ensures that some advantage of the legitimate users over the eavesdropper exists. Without self-interference and delay penalty, i.e., $\alpha=0$ and $\delta=1$, FD almost always performs better than HD mode, which has already been observed in the literature~\cite{src:khisti2012interactive}.
\end{remark}
\section{Summary}
\label{sec:conclusion}
In this work, we studied a secret-key generation setup for reciprocal wireless channels for nodes with half-duplex~(HD) and full-duplex~(FD) capabilities. We first developed a system model that captured the channel probing as well as the public communication overhead required for key reconciliation. Next, we  established a key-reconciliation function that is used as a metric for comparison between HD and FD modes. The analysis revealed the improvements in secret-key rate in FD mode over the HD mode even under the impact of self-interference. We analyzed the probing-reconciliation trade-off, which holds an optimal solution in both modes. In the case of a strong eavesdropper, we identified certain situations of channel statistics that are either most favorable or detrimental for the legitimate users. At the high SNR regime, we established a sufficient condition that guarantees superior performance of the FD over the HD mode.  From a system designer viewpoint, the results provide insight under which the FD can give certain gains over the conventional HD mode. 
%\section*{Acknowledgment}
%This work was funded by the Federal Ministry of Education and Research (BMBF) of the Federal Republic of Germany (F\"orderkennzeichen 16 KIS 0030, Prophylaxe). The authors alone are responsible for the content of the paper.

\appendices
\section{Key Computation Parameters in~\eqref{eq:keyfuncHDup} and~\eqref{eq:keyfuncFDlow}}
\label{sec:appGeneral}
%We provide the parameters according to~\eqref{eq:parametere} and \eqref{eq:parameterb}, first for the HD case. For convenience, we define $\lVert\bm b_x\rVert^2:=\sigma_x^2\lVert\bm b\rVert^2$ and $\lVert\bm e_x\rVert^2:=\sigma_x^2\lVert\bm e\rVert^2$. 
We now provide the computation of parameters which are required to establish the key computations functions for HD and FD modes in sub-sections~\ref{HD-Key-compuation} and~\ref{FD-Key-compuation}, respectively. Let $\sigma_x^2 := \expv\left[ x^2\right]$, $\sigma_y^2 := \expv\left[ y^2\right]$, $\sigma_{yx} := \expv\left[ xy\right]$ and $\sigma^2_{y|x}:=\expv\left[ y^2|x\right]$. The results are obtained under the condition that $\bm\Sigma_{\bm z}^{-1}$, $\bm\Sigma_{\bm z|x}^{-1}$ and $\sigma_{y|x}^{-2}$ exist in both modes.
\subsubsection{HD mode}
\label{sec:appGenHD}
In HD mode the parameters $\sigma_x^2$, $\sigma_y^2$ and $\sigma_{yx} $ are given by
\begin{align*}
\sigma_x^2 &= \sigma_y^2 = 1+\snr, \\
\sigma_{yx} &= \snr\delta\rho_{ba}.
%\vspace{-1em}
\end{align*}
Next we compute the parameters $\bxhd$ and $\exhd$ as follows. Starting from~\eqref{eq:parameterb} and~\eqref{pb}, we get
\begin{align}
\label{b-hd}
\bxhd &=
\frac{\sigma_{yx}^2\sigma_{y}^{-2}
}{\sigma^2_x-\sigma_{yx}\sigma_{y}^{-2}}\nonumber\\
&=\frac{\snr^2\delta^2\rho_{ba}^2}{(1+\snr)^2-\snr^2\delta^2\rho_{ba}^2}. 
\end{align}
Similarly, we can compute the parameters $\exhd$ as follows. From~\eqref{eq:parametere} and~\eqref{pe}, we get
\begin{align}
\label{mod-e-sq}
\exhd &=
\frac{\bm\sigma_{\bm zx}^T\bm\Sigma_{\bm z}^{-1}\bm\sigma_{\bm zx}
}{\sigma^2_x-\bm\sigma_{\bm zx}^T\bm\Sigma_{\bm z}^{-1}\bm\sigma_{\bm zx}},
\end{align}
where
\begin{align}
\label{sigma-z}
\bm\Sigma_{\bm z} &=
 \begin{pmatrix}
  1+\snrae & \sqrt{\snrae\snrbe}\rho_{e}  \\
  \sqrt{\snrae\snrbe}\rho_{e} &  1+\snrbe \\
 \end{pmatrix}, \\
 \label{sigma-zx}
\bm\sigma_{\bm zx}&=\sqrt{\snr}
 \begin{pmatrix}
  \sqrt{\snrae}\rho_{ae}   \\
  \sqrt{\snrbe}\rho_{be} \\
 \end{pmatrix}.
 \end{align}
 Then plugging in the values of~\eqref{sigma-z} and~\eqref{sigma-zx} in~\eqref{mod-e-sq} we get
 \ifCLASSOPTIONdraftcls
	\begin{align}
	\label{eq:exhdfull}
	\exhd = \frac{\text{N}_{\text{HD}}}{(1+\snr)\left[(1+\snrae)(1+\snrbe)-\snrae\snrbe\rho_e^2\right]-\text{N}_{\text{HD}}},
	\end{align}
 \else
~\eqref{eq:exhdfull} shown on top of the next page,
\begin{figure*}	
	\setcounter{equation}{57}
	\begin{align}
	\label{eq:exhdfull}
	\exhd = \frac{\text{N}_{\text{HD}}}{(1+\snr)\left[(1+\snrae)(1+\snrbe)-\snrae\snrbe\rho_e^2\right]-\text{N}_{\text{HD}}}
	\end{align}
	\hrulefill
\end{figure*}
 \fi

%\ifCLASSOPTIONdraftcls
% \begin{figure*}[b]
%  \hrulefill
% \setcounter{equation}{}
% \begin{align}
% &\lVert\bm e_x^{(HD)}\rVert^2 = \nonumber\\
% &= \frac{(1+\snre)\snre\snr(\rho_{ae}^2 + \rho_{be}^2)-2\snre^2\snr\rho_{ae}\rho_{be}\rho_{e}}{(1+\snr)\left((1+\snre)^2-\snre^2\rho_e^2\right)-(1+\snre)\snre\snr(\rho_{ae}^2 + \rho_{be}^2)+2\snre^2\snr\rho_{ae}\rho_{be}\rho_{e}},
% \end{align}
%  \end{figure*}
%\else
%\begin{align}
%\exhd = \frac{\text{num}}{\left[(1+\snr)\left((1+\snrae)(1+\snrbe)-\snrae\snrbe\rho_e^2\right)-\text{num}\right]},
%\end{align}
\noindent where $\text{N}_{\text{HD}}:=\snr(1+\snrbe)\snrae\rho_{ae}^2
+\snr(1+\snrae)\snrbe\rho_{be}^2
-2\snrae\snrbe\snr\rho_{ae}\rho_{be}\rho_{e}$.
%\fi

\subsubsection{FD mode}
\label{sec:appGenFD}
In FD mode the parameters $\sigma_x^2$, $\sigma_y^2$ and $\sigma_{yx} $ are given by
\begin{align*}
\sigma_x^2 &= \sigma_y^2 = 1+\snr(1+\alpha^2), \\
\sigma_{yx} &= \snr\delta\rho_{ba}.
\end{align*}
Next, we compute the parameters $\bxfd$ and $\exfd$ as follows. Starting from~\eqref{eq:parameterb} and~\eqref{pb}, we get
\begin{align}
\label{b-fd}
\bxfd &=
\frac{\snr^2\rho_{ba}^2}{(1+\snr(1+\alpha^2))^2-\snr^2\rho_{ba}^2}. 
\end{align}
Similarly, we can compute the parameters $\exhd$ as follows. From~\eqref{eq:parametere} and~\eqref{pe}, we get 
 \ifCLASSOPTIONdraftcls
 	\begin{align}
 	\label{eq:exfdfull}
 	\exfd=\frac{\text{N}_{\text{FD}}}{\left[1+\snr(1+\alpha^2)\right](1+\snrae+\snrbe+2\sqrt{\snrae\snrbe}\rho_e)-\text{N}_{\text{FD}}},
 	\end{align}
 \else
 ~\eqref{eq:exfdfull} shown on top of the next page,
 \begin{figure*}	
 	\setcounter{equation}{59}
 	\begin{align}
 	\label{eq:exfdfull}
 	\exfd=\frac{\text{N}_{\text{FD}}}{\left[1+\snr(1+\alpha^2)\right](1+\snrae+\snrbe+2\sqrt{\snrae\snrbe}\rho_e)-\text{N}_{\text{FD}}}
 	\end{align}
 	\hrulefill
 \end{figure*}
 \fi

%\ifCLASSOPTIONdraftcls
%\begin{align}
%\exfd &=
%\frac{\sigma_{zx}\sigma_{z}^{-2}
%}{\sigma^2_x-\sigma_{zx}\sigma_{z}^{-2}}\nonumber\\
%&= \frac{\snre\snr(\rho_{ae} + \rho_{be})^2}
%{(1+\snr(1+\alpha^2))(1+2\snre(1+\rho_e))-\snre\snr(\rho_{ae} + \rho_{be})^2}.
%\end{align}
%\else
%\begin{align}
%\exfd&=\frac{\text{num}}{\left[(1+\snr(1+\alpha^2))(1+\snrae+\snrbe+2\sqrt{\snrae\snrbe}\rho_e)-\text{num}\right]}, \nonumber
%\end{align}
%\fi
\noindent where $\text{N}_{\text{FD}}=\snr(\sqrt{\snrae}\rho_{ae} + \sqrt{\snrbe}\rho_{be})^2$, $\bm\Sigma_{\bm z}=1+\snrae+\snrbe+2\sqrt{\snrae\snrbe}\rho_e$ and  $\bm\sigma_{zx}=\sqrt{\snr\snrae}\rho_{ae} + \sqrt{\snr\snrbe}\rho_{be}$.
%num$:=\snr(\sqrt{\snrae}\rho_{ae} + \sqrt{\snrbe}\rho_{be})^2$, and  

\balance

\bibliographystyle{IEEEtran}
\bibliography{IEEEabrv,Conf_abrv_new,references}

% Generated by IEEEtran.bst, version: 1.13 (2008/09/30)
\begin{thebibliography}{10}
\providecommand{\url}[1]{#1}
\csname url@samestyle\endcsname
\providecommand{\newblock}{\relax}
\providecommand{\bibinfo}[2]{#2}
\providecommand{\BIBentrySTDinterwordspacing}{\spaceskip=0pt\relax}
\providecommand{\BIBentryALTinterwordstretchfactor}{4}
\providecommand{\BIBentryALTinterwordspacing}{\spaceskip=\fontdimen2\font plus
\BIBentryALTinterwordstretchfactor\fontdimen3\font minus
  \fontdimen4\font\relax}
\providecommand{\BIBforeignlanguage}[2]{{%
\expandafter\ifx\csname l@#1\endcsname\relax
\typeout{** WARNING: IEEEtran.bst: No hyphenation pattern has been}%
\typeout{** loaded for the language `#1'. Using the pattern for}%
\typeout{** the default language instead.}%
\else
\language=\csname l@#1\endcsname
\fi
#2}}
\providecommand{\BIBdecl}{\relax}
\BIBdecl

\bibitem{src:vogt2015full}
H.~Vogt and A.~Sezgin, ``{Full-Duplex vs. Half-Duplex Secret-Key Generation},''
  in \emph{{7th IEEE International Workshop on Information Forensics and
  Security}}, 2015.

\bibitem{src:khandani2013two}
A.~K. Khandani, ``{Two-Way (True Full-duplex) Wireless},'' in \emph{13th
  Canadian Workshop on Information Theory}, Toronto, Ontario, Canada, June
  2013, pp. 33--38.

\bibitem{src:duarte2014design}
M.~Duarte, A.~Sabharwal, V.~Aggarwal, R.~Jana, K.~Ramakrishnan, C.~W. Rice, and
  N.~Shankaranarayanan, ``{Design and Characterization of a Full-Duplex
  Multiantenna System for WiFi Networks},'' \emph{{IEEE} Trans. Veh. Technol.},
  vol.~63, no.~3, pp. 1160--1177, 2014.

\bibitem{src:bharadia2013full}
D.~Bharadia, E.~McMilin, and S.~Katti, ``Full duplex radios,'' in \emph{ACM
  Computer Communication Review}, vol.~43, no.~4, 2013, pp. 375--386.

\bibitem{src:korpi2014full}
D.~Korpi, T.~Riihonen, V.~Syrjala, L.~Anttila, M.~Valkama, and R.~Wichman,
  ``{Full-Duplex Transceiver System Calculations: Analysis of ADC and Linearity
  Challenges},'' \emph{{IEEE} Trans. Wireless Commun.}, vol.~13, no.~7, pp.
  3821--3836, July 2014.

\bibitem{src:jorswieck2015broadcasting}
E.~Jorswieck, S.~Tomasin, and A.~Sezgin, ``{Broadcasting Into the Uncertainty:
  Authentication and Confidentiality by Physical-Layer Processing},''
  \emph{Proceedings of the IEEE}, vol. 103, no.~10, pp. 1702--1724, 2015.

\bibitem{src:wyner1975wire}
A.~D. Wyner, ``{The wire-tap channel},'' \emph{The Bell System Technical
  Journal}, vol.~54, no.~8, pp. 1355--1387, 1975.

\bibitem{src:awan2014achievable}
Z.~Awan, A.~Zaidi, and A.~Sezgin, ``{Achievable secure degrees of freedom of
  MISO broadcast channel With alternating CSIT},'' in \emph{IEEE International
  Symposium on Information Theory}, Honolulu, HI, USA, June 2014, pp. 31--35.

\bibitem{src:chen2015on}
Y.~Chen, O.~O. Koyluoglu, and A.~Sezgin, ``On the individual secrecy rate
  region for the degraded broadcast channel,'' in \emph{IEEE International
  Symposium on Information Theory}, Hong Kong, June 2015, pp. 1347--1351.

\bibitem{src:awan2013multiaccess}
Z.~Awan, A.~Zaidi, and L.~Vandendorpe, ``{Multiaccess Channel With Partially
  Cooperating Encoders and Security Constraints},'' \emph{{IEEE} Trans. Inf.
  Forensics Security}, vol.~8, no.~7, pp. 1243--1254, 2013.

\bibitem{src:awan2012secure}
------, ``{Secure Communication Over Parallel Relay Channel},'' \emph{{IEEE}
  Trans. Inf. Forensics Security}, vol.~7, no.~2, pp. 359--371, 2012.

\bibitem{src:chen2014wiretap}
Y.~Chen, N.~Cai, and A.~Sezgin, ``{Wiretap Channel with Correlated Sources},''
  in \emph{{IEEE International Conference on Cloud Engineering}}, Boston, MA,
  USA, March 2014, pp. 472--477.

\bibitem{Liang}
Y.~Liang, H.~V. Poor, and S.~{Shamai (Shitz)}, ``Information theoretic
  security,'' \emph{Foundations and Trends in Communications and Information
  Theory}, vol.~5, no. 4-5, pp. 355--580, 2009.

\bibitem{src:gamal2013Achievable}
A.~El~Gamal, O.~Koyluoglu, M.~Youssef, and H.~El~Gamal, ``{Achievable Secrecy
  Rate Regions for the Two-Way Wiretap Channel},'' \emph{{IEEE} Trans. Inf.
  Theory}, vol.~59, no.~12, pp. 8099--8114, Dec 2013.

\bibitem{src:ahlswede1993}
R.~Ahlswede and I.~Csisz\'{a}r, ``{Common randomness in information theory and
  cryptography. I. Secret sharing},'' \emph{{IEEE} Trans. Inf. Theory},
  vol.~39, no.~4, pp. 1121--1132, 1993.

\bibitem{src:maurer1993}
U.~Maurer, ``Secret key agreement by public discussion from common
  information,'' \emph{{IEEE} Trans. Inf. Theory}, vol.~39, no.~3, pp.
  733--742, 1993.

\bibitem{src:chou2014separation}
R.~Chou and M.~Bloch, ``{Separation of Reliability and Secrecy in Rate-Limited
  Secret-Key Generation},'' \emph{{IEEE} Trans. Inf. Theory}, vol.~60, no.~8,
  pp. 4941--4957, Aug 2014.

\bibitem{src:csiszar2000common}
I.~Csisz{\'a}r and P.~Narayan, ``Common randomness and secret key generation
  with a helper,'' \emph{{IEEE} Trans. Inf. Theory}, vol.~46, no.~2, pp.
  344--366, 2000.

\bibitem{src:watanabe2010secret}
S.~Watanabe and Y.~Oohama, ``{Secret Key Agreement from Correlated Gaussian
  Sources by Rate Limited Public Communication},'' \emph{{IEICE Transactions on
  Fundamentals of Electronics, Communications and Computer Sciences}}, vol.~93,
  no.~11, pp. 1976--1983, 2010.

\bibitem{src:watanabe2011secret}
------, ``{Secret Key Agreement From Vector Gaussian Sources by Rate Limited
  Public Communication},'' \emph{{IEEE} Trans. Inf. Forensics Security},
  vol.~6, no.~3, pp. 541--550, 2011.

\bibitem{src:liu2014key}
J.~Liu, P.~Cuff, and S.~Verd{\'u}, ``Key capacity with limited one-way
  communication for product sources,'' in \emph{International Symposium on
  Information Theory}, Honolulu, HI, USA, July 2014, pp. 1146--1150.

\bibitem{src:nitinawarat2012secret}
S.~Nitinawarat and P.~Narayan, ``{Secret Key Generation for Correlated Gaussian
  Sources},'' \emph{{IEEE} Trans. Inf. Theory}, vol.~58, no.~6, pp. 3373--3391,
  2012.

\bibitem{src:chou2015sender}
T.-H. Chou, V.~Y. Tan, and S.~C. Draper, ``{The Sender-Excited Secret Key
  Agreement Model: Capacity, Reliability, and Secrecy Exponents},''
  \emph{{IEEE} Trans. Inf. Theory}, vol.~61, no.~1, pp. 609--627, 2015.

\bibitem{src:chen2014gaussian}
Y.~Chen, H.~Vogt, and A.~Sezgin, ``Gaussian wiretap channels with correlated
  sources: Approaching capacity region within a constant gap,'' in \emph{IEEE
  International Conference on Communications Workshops}, Sydney, Australia,
  June 2014, pp. 794--799.

\bibitem{src:mathur2008radio}
S.~Mathur, W.~Trappe, N.~Mandayam, C.~Ye, and A.~Reznik, ``Radio-telepathy:
  extracting a secret key from an unauthenticated wireless channel,'' in
  \emph{{14th Annual International Conference on Mobile Computing and
  Networking}}, San Francisco, CA, USA, Sept. 2008, pp. 128--139.

\bibitem{src:liu2012exploiting}
Y.~Liu, S.~Draper, and A.~Sayeed, ``{Exploiting Channel Diversity in Secret Key
  Generation From Multipath Fading Randomness},'' \emph{{IEEE} Trans. Inf.
  Forensics Security}, vol.~7, no.~5, pp. 1484--1497, Oct 2012.

\bibitem{src:jana2013secret}
S.~N. Premnath, S.~Jana, J.~Croft, P.~L. Gowda, M.~Clark, S.~K. Kasera,
  N.~Patwari, and S.~V. Krishnamurthy, ``{Secret Key Extraction from Wireless
  Signal Strength in Real Environments},'' \emph{{IEEE} Trans. Mobile Comput.},
  vol.~12, no.~5, pp. 917--930, May 2013.

\bibitem{src:wallace2009key}
J.~Wallace, C.~Chen, and M.~Jensen, ``{Key generation exploiting MIMO channel
  evolution: Algorithms and theoretical limits},'' in \emph{3rd European
  Conference on Antennas and Propagation}, Berlin, Germany, March 2009, pp.
  1499--1503.

\bibitem{src:ye2010information}
C.~Ye, S.~Mathur, A.~Reznik, Y.~Shah, W.~Trappe, and N.~B. Mandayam,
  ``{Information-Theoretically Secret Key Generation for Fading Wireless
  Channels},'' \emph{{IEEE} Trans. Inf. Forensics Security}, vol.~5, no.~2, pp.
  240--254, 2010.

\bibitem{src:khisti2012interactive}
A.~Khisti, ``Interactive secret key generation over reciprocal fading
  channels,'' in \emph{50th Annual Allerton Conference on Communication,
  Control, and Computing}, Monticello, IL, USA, Oct. 2012, pp. 1374--1381.

\bibitem{src:zhou2014secret}
H.~Zhou, L.~Huie, and L.~Lai, ``{Secret Key Generation in the Two-Way Relay
  Channel With Active Attackers},'' \emph{{IEEE} Trans. Inf. Forensics
  Security}, vol.~9, no.~3, pp. 476--488, March 2014.

\bibitem{src:el2011network}
A.~El~Gamal and Y.-H. Kim, \emph{{Network Information Theory}}.\hskip 1em plus
  0.5em minus 0.4em\relax {Cambridge University Press}, 2011.

\bibitem{src:tse2005fundamentals}
D.~Tse and P.~Viswanath, \emph{Fundamentals of Wireless Communication}.\hskip
  1em plus 0.5em minus 0.4em\relax {Cambridge University Press}, 2005.

\bibitem{src:bloch2011physical}
M.~R. Bloch and J.~Barros, \emph{{P}hysical-{L}ayer {S}ecurity: {F}rom
  {I}nformation {T}heory to {S}ecurity {E}ngineering}.\hskip 1em plus 0.5em
  minus 0.4em\relax {Cambridge University Press}, 2011.

\bibitem{src:ahmed2013rate}
E.~Ahmed, A.~M. Eltawil, and A.~Sabharwal, ``{Rate Gain Region and Design
  Tradeoffs for Full-Duplex Wireless Communications},'' \emph{{IEEE} Trans.
  Wireless Commun.}, vol.~12, no.~7, pp. 3556--3565, 2013.

\bibitem{src:cover2012elements}
T.~M. Cover and J.~A. Thomas, \emph{Elements of Information Theory}.\hskip 1em
  plus 0.5em minus 0.4em\relax John Wiley \& Sons, 2012.

\bibitem{src:oyman2002tight}
O.~Oyman, R.~U. Nabar, H.~Bolcskei, and A.~J. Paulraj, ``{Tight lower bounds on
  the ergodic capacity of Rayleigh fading MIMO channels},'' in \emph{Global
  Telecommunications Conference}, Taipei, Taiwan, Nov. 2002, pp. 1172--1176.

\bibitem{src:lee2012bayesian}
P.~M. Lee, \emph{{Bayesian Statistics: An Introduction}}.\hskip 1em plus 0.5em
  minus 0.4em\relax John Wiley \& Sons, 2012.

\bibitem{src:petersen2012matrix}
K.~B. Petersen, M.~S. Pedersen \emph{et~al.}, ``The matrix cookbook,''
  \emph{Technical University of Denmark}, p.~72, 2012.

\bibitem{src:heino2015recent}
M.~Heino, D.~Korpi, T.~Huusari, E.~Antonio-Rodriguez, S.~Venkatasubramanian,
  T.~Riihonen, L.~Anttila, C.~Icheln, K.~Haneda, R.~Wichman, and M.~Valkama,
  ``Recent advances in antenna design and interference cancellation algorithms
  for in-band full duplex relays,'' \emph{{IEEE} Commun. Lett.}, vol.~53,
  no.~5, pp. 91--101, May 2015.

\bibitem{src:liu2013fast}
H.~Liu, Y.~Wang, J.~Yang, and Y.~Chen, ``Fast and practical secret key
  extraction by exploiting channel response,'' in \emph{32nd IEEE International
  Conference on Computer Communications}, Turin, Italy, April 2013, pp.
  3048--3056.

\end{thebibliography}

\listoftodos

% that's all folks
\end{document}